\documentclass[a4paper,UKenglish,cleveref, autoref]{lipics-v2019}
\usepackage[utf8]{inputenc}

\usepackage{amsmath,amsfonts}
\usepackage{mathtools}
\usepackage{stmaryrd}
\usepackage{bm}
\usepackage{tikz}
\usetikzlibrary{automata,petri,positioning,calc}
\usepackage{colortbl}
\usepackage{multirow}
\usepackage{macros}

\usepackage{thm-restate}

\bibliographystyle{plainurl}

\title{Expressive Power of Broadcast Consensus Protocols}

\titlerunning{Expressive Power of Broadcast Consensus Protocols}

\author{Michael Blondin}{Département d'informatique, Université de
  Sherbrooke, Sherbrooke, Canada}{michael.blondin@usherbrooke.ca}{}{
  Supported by the Fonds de recherche du Québec – Nature et
  technologies (FRQNT), by a Quebec--Bavaria project funded by the
  Fonds de recherche du Québec (FRQ), and by the Natural Sciences and
  Engineering Research Council of Canada (NSERC)}

\author{Javier Esparza}{ Fakultät für Informatik, Technische
  Universität München, Garching bei München,
  Germany}{esparza@in.tum.de}{}{Supported by an ERC Advanced Grant
  (787367: PaVeS)}

\author{Stefan Jaax}{ Fakultät für Informatik, Technische Universität
  München, Garching bei München, Germany} {jaax@in.tum.de}{}{Supported
  by an ERC Advanced Grant (787367: PaVeS)}

\authorrunning{M. Blondin, J. Esparza and S. Jaax}

\Copyright{John Q. Public and Joan R. Public}

\ccsdesc[500]{Theory of computation~Distributed computing models}
\ccsdesc[500]{Theory of computation~Complexity classes}
\ccsdesc[500]{Theory of computation~Automata over infinite objects}

\keywords{population protocols, complexity theory, counter machines, distributed computing}

\acknowledgements{Part of this work was realized while Stefan Jaax was
  visiting the Université de Sherbrooke.
  We warmly thank the anonymous reviewers for their helpful comments and suggestions.}

\nolinenumbers

\EventEditors{John Q. Open and Joan R. Access}
\EventNoEds{2}
\EventLongTitle{42nd Conference on Very Important Topics (CVIT 2016)}
\EventShortTitle{CVIT 2016}
\EventAcronym{CVIT}
\EventYear{2016}
\EventDate{December 24--27, 2016}
\EventLocation{Little Whinging, United Kingdom}
\EventLogo{}
\SeriesVolume{42}
\ArticleNo{23}

\begin{document}

\maketitle

\begin{abstract}
  Population protocols are a formal model of computation by identical,
  anonymous mobile agents interacting in pairs.  Their computational
  power is rather limited: Angluin \textit{et al}. have shown that
  they can only compute the predicates over $\N^k$ expressible in
  Presburger arithmetic. For this reason, several extensions of the
  model have been proposed, including the addition of devices called
  cover-time services, absence detectors, and clocks. All these
  extensions increase the expressive power to the class of predicates
  over $\N^k$ lying in the complexity class $\NL$ when the input is
  given in unary. However, these devices are difficult to implement,
  since they require that an agent atomically receives messages from
  \emph{all} other agents in a population of unknown size; moreover, the
  agent must \emph{know} that they have all been received. Inspired by
  the work of the verification community on Emerson and Namjoshi's
  broadcast protocols, we show that $\NL$-power is also achieved by
  extending population protocols with reliable broadcasts, a simpler,
  standard communication primitive.
\end{abstract}

\section{Introduction}\label{sec:introduction}
Population protocols are a theoretical model for the study of ad hoc
networks of tiny computing devices without any
infrastructure~\cite{AADFP04,AADFP06}, intensely investigated in recent years (see
e.g.~\cite{AlistarhAEGR17,AlistarhAG18,AlistarhG18,ElsasserR18}). The
model postulates a ``soup'' of indistinguishable agents that behave
identically, and only have a fixed number of bits of memory, i.e., a
finite number of local states. Agents repeatedly interact in pairs,
changing their states according to a joint transition function. A
global fairness condition ensures that every finite sequence of
interactions that becomes enabled infinitely often is also executed
infinitely often. The purpose of a population protocol is to allow 
agents to collectively compute some information about their initial configuration, 
defined as the function that assigns to each local state the number of agents
that initially occupy it. For example, assume that initially each agent picks a
boolean value by choosing, say, $q_0$ or $q_1$ as its initial state. 
The many \emph{majority protocols} described in the
literature allow the agents to eventually reach a stable consensus on
the value chosen by a majority of the agents. More formally, let $x_0$
and $x_1$ denote the initial numbers of agents in states $q_0$ and $q_1$;
majority protocols compute the predicate $\varphi \colon \Nat \times \Nat \to \{0, 1\}$ given by $\varphi(x_0, x_1)
= (x_1 \geq x_0)$. Throughout the paper, we use the term ``predicate'' as an abbreviation
for ``function from $\N^k$ to $\{0,1\}$ for some $k$''.

In a seminal paper, Angluin~\emph{et al.}\ proved that population
protocols compute exactly the predicates expressible in Presburger
arithmetic~\cite{AADFP06,AAER07}. Thus, for example, agents can
decide if they are at least a certain number, if at least $2/3$ of them voted the same way, or, more generally, if the vector $(x_1,
x_2, \ldots, x_n)$ representing the number of agents that picked
option $1, 2, \ldots, n$ in an election with $n$ choices is a solution
of a system of linear inequalities. On the other hand, they
cannot decide if they are a square or a prime number, or if the
product of the number of votes for options $1$ and $2$ exceeds the
number of votes for option $3$. Much work has been devoted to designing more
powerful formalisms and analyzing their expressive power. In particular,
population protocols have recently been extended with capabilities allowing an agent to obtain
global information about the current configuration, which we proceed to describe. 

In \cite{MichailS15}, Michail 
and Spirakis extend the population protocol model
with \emph{absence detectors}, by means of which an agent knows, for every
state, whether the state is currently populated or not. Further, they implement absence detectors
by a weaker object called a \emph{cover-time service}, which allows an agent
to deduce if it has interacted with every other agent in the system. They prove that
protocols with cover-time can compute all predicates in $\DSPACE(\log n)$ and
can only compute predicates in $\NSPACE(\log n)=\NL$, where $n$ is the number of agents\footnote{Observe that, for example, $n$ agents can decide whether $n$ is prime. Indeed, a Turing machine can decide if $n$ is a prime number in $\Theta(\log n)$ space by going through all numbers from $2$ to $n-1$, and checking for each of them if they divide $n$.}.

In \cite{Aspnes17},  Aspnes observes that cover-time services are a kind of internal clock
mechanism, and introduces clocked population protocols. Clocked protocols have a clock
oracle that signals to one or more agents that the population has reached a bottom strongly connected component of the configuration graph, again an item of global information. Aspnes shows that clocked protocols can compute exactly the predicates in \NL. 

Absence detectors, cover-time services, and clocked protocols are difficult to implement, since they require that an agent reliably receives information from \emph{all} other agents; moreover, the agent needs to \emph{know} that it has already received messages from all other agents before making a move, which is particularly difficult because agents are assumed to have no identities and to ignore the size of the population. In this paper, we propose a much simpler extension (from an implementation point of view): We allow agents  to perform reliable broadcasts, a standard operation in concurrency and distributed computing. We are inspired by the broadcast protocol model introduced by Emerson and Namjoshi in~\cite{EmersonN98} to describe bus-based hardware protocols. The model has been used and further studied in many other contributions, e.g.~\cite{EsparzaFM99,FinkelL02,DelzannoRB02,SchmitzS13,BDGG17}. In
broadcast protocols, agents can perform binary interactions, as in the
population protocol model, but, additionally, an agent can also broadcast a signal to
all other agents, which are guaranteed to react to it. Broadcast protocols are rather simple to
implement with current technology on mobile agents moving in a limited area.  Broadcasts also
appear in biological systems. For example, Uhlendorf \textit{et al}. describe a system in which a controller adds a sugar or saline solution to a population of
yeasts, to which all the yeasts react~\cite{uhlendorf2015silico}. An idealized model of the system, which is essentially a broadcast protocol, has been analyzed by Bertrand \textit{et al}. in~\cite{BDGG17}.

In this paper, we show that population protocols with reliable broadcasts also compute \emph{precisely} the predicates in \NL, and are therefore as powerful as absence detectors or clocks. To prove this result, we first define the notion of \emph{silent semi-computation}, a weaker notion than standard computation, and prove that broadcast protocols silently semi-compute all protocols in \NL. This result makes crucial use of the ability of broadcast protocols to ``restart'' the whole population nondeterministically whenever something bad or unexpected is detected. We then prove that silent semi-computability and computability coincide for the class \NL.

In a second contribution, we explore in more detail the minimal requirements for achieving \NL\ power. On the one hand, we show that it is enough to allow \emph{a single} agent to broadcast \emph{a single} signal. On the other hand, we prove that the addition of a reset, which causes all agents to return to their initial states, does not increase the power of population protocols.

\section{Preliminaries}\label{sec:preliminaries}
\parag{Multisets} A \emph{multiset} over a finite set $E$ is a mapping $M \colon E \to
\N$. The set of all multisets over $E$ is denoted $\N^E$. For every $e
\in E$, $M(e)$ denotes the number of occurrences of $e$ in $M$. We
sometimes denote multisets using a set-like notation, \eg
$\multiset{f, g, g}$ is the multiset $M$ such that $M(f) = 1$, $M(g) =
2$ and $M(e) = 0$ for every $e \in E \setminus \{f, g\}$.  Addition
and comparison are extended to multisets componentwise, \ie $(M \mplus
M')(e) \defeq M(e) + M'(e)$ for every $e \in E$, and $M \leq M'
\defiff M(e) \leq M'(e)$ for every $e \in E$. We define multiset
difference as $(M \mminus M')(e) \defeq \max(M(e) - M'(e), 0)$ for
every $e \in E$. The empty multiset is denoted $\vec{0}$ and, for
every $e \in E$, we write $\vec{e} \defeq \multiset{e}$. Finally, we
define the \emph{support} and \emph{size} of $M \in \N^E$ respectively
as $\supp{M} \defeq \{e \in E : M(e) > 0\}$ and
$|M| \defeq \sum_{e \in E} M(e)$.

\parag{Population protocols}
A \emph{population} over a finite set $E$ is a multiset $P \in
\N^E$ such that $|P| \geq 2$. The set of all populations over $E$ is
denoted by $\pop{E}$. A \emph{population protocol with
  leaders} (population protocol for short) is a tuple $\PP = (Q, R, \Sigma, L, I, O)$ where:
\begin{itemize}
\item $Q$ is a non-empty finite set of \emph{states},

\item $R \subseteq (Q \times Q) \times (Q \times Q)$ is a set of \emph{rendez-vous
   transitions},

\item $\Sigma$ is a non-empty finite \emph{input alphabet},

\item $I \colon \Sigma \to Q$ is the \emph{input function} mapping
  input symbols to states,

\item $L \in \N^Q$ is the multiset of \emph{leaders}, and

\item $O \colon Q \to \{0, 1\}$ is the \emph{output function} mapping
  states to boolean values.
\end{itemize}

Following the standard convention, we call elements of
$\pop{Q}$ \emph{configurations}. Intuitively, a configuration $C$
describes a collection of identical finite-state \emph{agents} with
$Q$ as set of states, containing $C(q)$ agents in state $q$ for every
$q \in Q$, and at least two agents in total.

We write $(p, q) \mapsto (p', q')$ to denote that $(p, q, p', q') \in
 R$. The relation ${\Step} \colon \pop{Q} \to \pop{Q}$ is defined by:
 $(C, C') \in {\Step}$ if{}f there exists $(p, q, p', q') \in R$ such
 that $C \geq \multiset{p, q}$ and $C' = C \mminus \multiset{p,
 q} \mplus \multiset{p', q'}$. We write $C \trans{} C'$ if $(C,
 C') \in \Step$, and $C \trans{*} C'$ if $(C, C') \in \Step^*$, the
 reflexive and transitive closure of $\Step$. If $C \trans{*} C'$,
 then we say that $C'$ is
\emph{reachable} from $C$. An \emph{execution} is an infinite sequence
of configurations $C_0 C_1 \cdots$ such that $C_i \trans{} C_{i+1}$
for every $i \in \N$. An execution $C_0 C_1 \cdots$ is \emph{fair} if
for every step $C \trans{} C'$ the following holds: if $C_i = C$ for
infinitely many indices $i \in \N$, then $C_j = C'$
for infinitely many indices $j \in \N$.

We now explain the roles of the input function $I$ and the multiset
$L$ of leaders. The elements of $\pop{\Sigma}$ are called
\emph{inputs}. For every input $X \in \pop{\Sigma}$, let $I(X) \in
\pop{Q}$ denote the configuration defined by
\begin{align*}
  I(X)(q) &\defeq \sum_{\{\sigma \in \Sigma : I(\sigma) = q \}}
  X(\sigma) && \text{for every}\ q \in Q.
\end{align*}

A configuration $C$ is \emph{initial} if $C = I(X)+L$ for some input
$X$.  Intuitively, the agents of $I(X)$ encode the input, while those
of $L$ are a fixed number of agents, traditionally called leaders, that
perform the computation together with the agents of $I(X)$.

\medskip

\parag{Predicate computed by a protocol} If $O(p) = O(q)$ for every
$p, q \in \supp{C}$, then $C$ is a \emph{consensus configuration}, and
$O(C)$ denotes the unique output of the states in $\supp{C}$. We say
that a consensus configuration $C$ is a \emph{$b$-consensus} if $O(C)
= b$. An execution $C_0 C_1 \cdots$ \emph{stabilizes} to $b \in \{0,
1\}$ if there exists $n \in \N$ such that $C_i$ is a $b$-consensus for
every $i \geq n$.

A protocol $\PP$ over an input alphabet $\Sigma$
\emph{computes} a predicate $\varphi \colon \pop{\Sigma} \to \{0, 1\}$ if
for every input $X \in \pop{\Sigma}$, every fair execution of $\PP$
starting at the initial configuration $I(X)+L$ stabilizes to $\varphi(X)$.

Throughout the paper, we assume $\Sigma = \{A_1, \ldots, A_k\}$ for
some $k > 0$. Abusing language, we identify population $M \in
\pop{\Sigma}$ to vector $\vec{\alpha} = (M(A_1), \ldots, M(A_k))$, and
say that $\PP$ computes a \emph{predicate $\varphi \colon \Nat
  ^k \to \{0, 1\}$ of arity $k$}. In the rest of the paper, the term ``predicate''
  is used with the meaning ``function from $\Nat^k$ to $\{0,1\}$''. It is known that:

\begin{theorem}[{\cite{AAER07}}]
Population protocols compute exactly the predicates expressible in Presburger arithmetic, i.e. the first-order theory of the natural numbers with addition.
\end{theorem}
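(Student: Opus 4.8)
The plan is to prove the two inclusions separately, since the statement is an equality of two classes of predicates. Throughout I will use the standard fact that the Presburger-definable subsets of $\N^k$ are precisely the \emph{semilinear} sets, and that every semilinear set is a finite Boolean combination of \emph{threshold} predicates $\sum_i a_i x_i \geq c$ and \emph{modular} predicates $\sum_i a_i x_i \equiv c \pmod m$.

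For the easier direction, that every Presburger predicate is computable, I would first build protocols for the two kinds of atomic predicates and then close the resulting class under Boolean operations. For a threshold predicate, each input agent contributes its coefficient to a running total maintained by the agents; since only the comparison with the constant $c$ matters, the total can be saturated at $c$, so finitely many states suffice, and the population stabilizes to a consensus recording whether the threshold is reached. A modular predicate is handled identically, keeping the running sum modulo $m$, which is automatically bounded. Closure under negation is immediate by swapping the output function $O$. For conjunction and disjunction I would run two protocols in parallel via a product construction on the state sets and let the output function combine the two component outputs; the main point to check is that a fair execution of the product projects to a fair execution in each component, so that both sub-protocols individually stabilize and the combined output is eventually correct.

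For the harder direction, that every computable predicate is Presburger, I would view a protocol as a Petri net (a vector addition system) over the configuration space $\N^Q$ and analyze the predicate it computes semantically. By the fairness condition, a fair execution stabilizes to $b$ exactly when it reaches a \emph{$b$-stable} configuration, i.e.\ a $b$-consensus configuration all of whose reachable successors are again $b$-consensus configurations. The set of $b$-consensus configurations is trivially semilinear, being cut out by the linear constraint that all agents occupy states with output $b$. The crux is to show that the set of $b$-stable configurations, and then the set of inputs $X$ for which $I(X)+L \trans{*} D$ for some $b$-stable $D$, are again semilinear. I would obtain this from the semilinearity of the relevant reachability and predecessor sets of vector addition systems, exploiting Dickson's lemma and the monotonicity (well-structuredness) of the $\Step$ relation. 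Since the protocol is assumed to compute $\varphi$, this set of inputs coincides with $\varphi^{-1}(1)$, and as the image of a semilinear set under the linear map $X \mapsto I(X)+L$ it is semilinear, hence Presburger-definable.

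The main obstacle is the semilinearity argument in the second direction: proving that the set of stable configurations is semilinear is the genuinely nontrivial step, as it requires the structural theory of Petri nets and well-structured transition systems rather than an elementary pumping argument, and it is here that the bulk of the technical work lies.
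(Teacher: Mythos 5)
First, a point of reference: the paper does not prove this theorem at all --- it is quoted verbatim from~\cite{AAER07}, so your proposal must be measured against the known proof of Angluin \emph{et al.} rather than against anything in this paper. Your easy direction (atomic threshold and modular protocols, closure under Boolean combinations via a product construction, with the fairness-projection check) is essentially the standard construction and is fine as a sketch, modulo the usual care needed when coefficients are negative (one cannot simply ``saturate at $c$''; the standard protocol preserves the exact sum of values in a range $[-s,s]$ and saturates only one of the two agents). The hard direction, however, contains a genuine gap, and it sits precisely where you locate ``the bulk of the technical work''. You propose to obtain semilinearity of the accepted-input set ``from the semilinearity of the relevant reachability and predecessor sets of vector addition systems''. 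No such fact exists: Hopcroft and Pansiot exhibited a $6$-dimensional VAS whose reachability set is not semilinear, and predecessor sets are equally wild. What Dickson's lemma and monotonicity \emph{do} give you is only the first half of your crux: the set of $b$-stable configurations is downward closed (adding idle agents to a configuration that can produce a $\neg b$-output still can), hence semilinear. But $\pre^*$ of a downward-closed set has no order-theoretic structure at all --- removing agents from $C \in \pre^*(F_b)$ may disable the witnessing run, and adding agents moves the run's endpoint out of no set, so $\pre^*(F_b)$ is neither upward nor downward closed. The WSTS/backward-analysis machinery you implicitly invoke handles \emph{upward-closed} targets only; the stability target points the wrong way, and there is no elementary route from Dickson's lemma to the semilinearity of $\pre^*(F_b)$.

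This is exactly why~\cite{AAER07} was a substantial result: its proof is a bespoke pumping argument, using Higman/Dickson-style extractions over stable configurations together with delicate surgery on execution sequences (truncation lemmas), not an appeal to structural VAS theory. The modern alternative routes run through Leroux's theory of almost-semilinear reachability sets and Presburger-definable mutual reachability --- far deeper than what you cite. Indeed, this very paper illustrates the obstruction: in the proof of Proposition~\ref{prop:reset} it needs a statement of precisely your shape and carefully avoids claiming that $\pre^*(F_1)$ is semilinear outright; instead it invokes a special partition lemma from~\cite{esparza2017verification} (if $X \cap \pre^*(F_0)$ and $X \cap \pre^*(F_1)$ \emph{partition} a Presburger set $X$, then both are Presburger), which crucially exploits that the protocol computes a \emph{total} predicate and whose own proof rests on Leroux's results. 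Your argument would need either that lemma or the original pumping proof; as written, the step is unsupported and, as a general claim about VAS, false. A last minor slip: the accepted-input set is the \emph{preimage}, not the image, of $\pre^*(F_1)$ under $X \mapsto I(X)+L$; this is harmless, since preimages of semilinear sets under affine maps are semilinear, but the direction should be stated correctly.
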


\section{Broadcast consensus protocols}
Broadcast protocols were introduced by Emerson and Namjoshi
in~\cite{EmersonN98} as a formal model of bus-based hardware protocols, such as
those for cache coherency. The model has also been applied to the
verification of multithreaded programs~\cite{DelzannoRB02}, and to
idealized modeling of control problems for living
organisms~\cite{uhlendorf2015silico,BDGG17}. Its theory has
been further studied in~\cite{EsparzaFM99,FinkelL02,SchmitzS13}.

Agents of broadcast protocols can communicate in pairs, as in
population protocols, and, additionally, they can also communicate by
means of a reliable broadcast. An agent can broadcast a signal to all
other agents, which after receiving the signal move to a new state.
Broadcasts are routinely used in wireless ad-hoc and sensor networks
(see e.g.~\cite{AbolhasanWD04,YickMG08}), and so they are easy to
implement on the same kind of systems targeted by population
protocols. They can also model idealized versions of communication in
natural computing. For example, in~\cite{BDGG17} they are used to
model ``communication'' in which an experimenter ``broadcasts'' a signal to
a colony of yeasts by increasing the concentration of a nutrient in a
solution.

We introduce broadcast consensus protocols, i.e., broadcast protocols
whose goal is to compute a predicate in the computation-by-consensus
paradigm.

\begin{definition}\label{def:broadcast}
A broadcast consensus protocol is a tuple $\PP = (Q, R, B, \Sigma, L, I, O)$,
where all components but $B$ are defined as for population protocols, and $B$ is a set
of \emph{broadcast transitions}. A \emph{broadcast transition} is a triple $(q,r,f)$
where $q,r \in Q$ and $f \colon Q \rightarrow Q$ is a \emph{transfer function}.

The relation ${\Step} \subseteq \pop{Q} \times \pop{Q}$ of $\PP$ is defined as follows. A pair $(C, C')$ of configurations belongs to ${\Step}$ if{}f
\begin{itemize}
\item there exists $(p, q) \mapsto (p', q') \in R$ such that $C \geq \multiset{p, q}$ and $C' = C
 \mminus \multiset{p, q} \mplus \multiset{p', q'}$; or
\item there exists a transition $(q, r, f) \in B$ such that $C(q) \geq 1$ and $C'$
 is the configuration computed from $C$ in the following three steps:
 \begin{align}
   C_1     &= C \mminus \multiset{q}, \\
   C_2(q') &= \sum_{r' \in f^{-1}(q')} C_1(r') & \text{for every}~q' \in Q,\\
   C'      &= C_2 \mplus \multiset{r}.
 \end{align}
 \end{itemize}
\end{definition}

Intuitively, (1)--(3) is interpreted as follows: (1)~an agent at
state $q$ broadcasts a signal and leaves $q$, yielding $C_1$; (2)~all other agents receive the signal
and move to the states indicated by the function $f$, yielding $C_2$; and (3)~the
broadcasting agent enters state $r$, yielding $C'$.  Correspondingly, instead of $(q,r,f)$
we use ${q \mapsto r}; \, f$ as notation for a broadcast transition.\medskip

\newcommand{\transfer}{\mapsto}

\parag{Beyond Presburger arithmetic} As a first illustration of the power of broadcast protocols, we show
that their expressive power goes beyond Presburger arithmetic, and so
beyond the power of population protocols. We present a
broadcast consensus protocol for the predicate $\varphi$, defined as $\varphi(x)
= 1$ if{}f $x > 1$ and $x$ is a power of two. For readability, we
use the notation $q \mapsto q';
[q_1 \transfer q_1', \ldots, q_n \transfer q_n']$ for a broadcast transition, where $f(q_i) =
q_i'$ and where transfers of the form $q_i \transfer q_i$ may be omitted.

Let $\PP = (Q, R, B, \Sigma, L, I, O)$ be the broadcast consensus
protocol where $Q \defeq \{x, \overline{x}, \tilde{x}, 0,
1,\allowbreak \bot\}$, $\Sigma \defeq \{x\}$, $I \defeq x \mapsto x$,
$L \defeq \vec{0}$, $O(q) = 1 \defiff q = 1$, and $R$ and $B$ are
defined as follows:
\begin{itemize}
\item $R$ contains the rendez-vous transition $s \colon (x, x) \mapsto (\overline{x}, 0)$;

\newcommand{\maketransfer}[2]{\mathmakebox[6pt][r]{#1} \transfer \mathmakebox[6pt][l]{#2}}

\item $B$ contains the broadcast transitions $r \colon \bot \mapsto x
  ; \left[q \mapsto x : q \in Q\right]$ and
  {\small $$\begin{array}{rlllllll}
  \overline{s} \colon
  \overline{x} \mapsto x;
  \left[\!\!\begin{array}{l}
      \maketransfer{x}{\bot} \\
      \maketransfer{\overline{x}}{x} \\
      \maketransfer{0}{1}
    \end{array} \!\!\right]
  &
  \overline{t_0} \colon
  \overline{x} \mapsto \overline{x};
  \left[\!\!\begin{array}{l}
      \maketransfer{1}{0}
    \end{array} \!\!\right]
  &
  t_0 \colon
  x \mapsto 0;
  \left[\!\!\begin{array}{l}
      \maketransfer{x}{\bot} \\
      \maketransfer{\overline{x}}{0} \\
      \maketransfer{1}{\bot}
    \end{array} \!\!\right]
  &
  t_1 \colon
  x \mapsto 1;
  \left[\!\!\begin{array}{l}
      \maketransfer{x}{\bot} \\
      \maketransfer{\overline{x}}{\bot} \\
      \maketransfer{0}{\bot}
    \end{array} \!\!\right].
\end{array}$$
}
  \end{itemize}
  Intuitively, $\PP$ repeatedly halves the number of agents in state
  $x$, and it accepts if{f} it never obtains an odd remainder. More
  precisely, the transitions of $\PP$ are intended to be fired as
  follows, where $C$ denotes the current configuration:
  \begin{quote}
    \newcommand{\inst}[1]{\textbf{#1}}
    \newcommand{\icol}{:\;}
    \newcommand{\ind}{\quad}
    \newcommand{\comm}[1]{\hspace{1cm}\textcolor{black!60!gray}{\small/*~\makebox[6.5cm][r]{#1}~*/}}
    \ttfamily
    \begin{tabular}{lr}
      \inst{while} $C(x) \neq 1$\icol \\
      \ind \inst{while} $C(x) \geq 2$\icol \inst{fire} $s$
      & \comm{split agents equally from $x$ to $\overline{x}$ and $0$} \\
      \ind \inst{if} $C(x) = 0$\icol \inst{fire} $\overline{s}$
      & \comm{move agents from $\overline{x}$ to $x$ if no remainder} \\[1.5pt]
      \inst{if} $C(\overline{x}) = 0$\icol \inst{fire} $t_1$
      & \comm{if no remainder, then accept} \\
      \inst{else}\icol \inst{fire} $\overline{t_0}\, t_0$
      & \comm{otherwise, reject}
    \end{tabular}
  \end{quote}
  It is easy to show that $\PP$ produces a (lasting) consensus, and
  the right one, if transitions are executed as above. However, an
  arbitrary execution may not follow the above procedure. Firing
  transition $\overline{t_0}$ when not intended has no incidence on
  the outcome. Moreover, if another transition is fired when it should
  not be, then $\overline{s}$, $t_0$ or $t_1$ will detect this error
  by moving an agent to state $\bot$. In this case, by fairness, $r$
  eventually resets the agents back to the initial configuration and,
  again by fairness, transitions are eventually fired as intended.

\begin{restatable}{proposition}{propNonSemilin}\label{prop:non:semilin}
  The broadcast consensus protocol $\PP$ described above computes the
  predicate $\varphi$, defined as $\varphi(x)= 1$ if{}f $x > 1$
  and $x$ is a power of two.
\end{restatable}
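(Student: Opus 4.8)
The plan is to rely on the standard correctness criterion for computation by consensus: a protocol computes $\varphi$ if and only if, for every input, from every reachable configuration some \emph{stable} consensus carrying the correct output is reachable, where a configuration is a stable $b$-consensus when every configuration reachable from it is a $b$-consensus. It then suffices to fix an input $n \geq 2$, let $C^\star$ be the all-$1$ configuration if $n$ is a power of two and the all-$0$ configuration otherwise, and establish: (i)~$C^\star$ is a stable consensus with output $\varphi(n)$, and (ii)~$C^\star$ is reachable from every configuration reachable from $I(n)$. Fact~(i) is immediate, since $C^\star$ is \emph{dead} (no transition is enabled), hence trivially a stable consensus. All the content is in~(ii), and it splits into a reachability analysis of $\bot$-free runs and a ``restart'' argument that appeals to~$r$.

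First I would note that $\tilde{x}$ is never populated and that, along $\bot$-free runs, so is $\bot$; thus I can describe a configuration by the tuple $(a,b,c,d)$ counting the agents in $x, \overline{x}, 0, 1$. The heart of the proof is the invariant
\[
  V \defeq a + 2b = n/2^{\,j},
\]
where $j$ is the number of times $\overline{s}$ has fired since the last reset. Indeed, $V$ is left unchanged by $s$ (two agents in $x$ become one in $\overline{x}$ and one in $0$) and by $\overline{t_0}$, whereas $\overline{s}$ is $\bot$-free only when $a=0$, in which case $V=2b$ is even and is exactly halved; hence along every $\bot$-free run from $I(n)=(n,0,0,0)$ the quantity $V$ is an integer of the form $n/2^{\,j}$, and $\overline{s}$ can fire at most $\nu_2(n)$ times. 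I would then observe that every reachable dead configuration is reached by a $\bot$-free suffix starting in $I(n)$, because a $\bot$, once created, can only be removed by $r$, which returns the population to $I(n)$; so the invariant applies to this suffix. Since every transition other than $t_0$ and $t_1$ leaves an agent in $x$, $\overline{x}$, or $\bot$, a dead configuration can only be produced by $t_1$ (giving all-$1$, and requiring $(a,b,c)=(1,0,0)$, hence $V=1$ and $n=2^{\,j}$) or by $t_0$ (giving all-$0$, and requiring $a=1,\,d=0$). A short case analysis using $V$ then shows that all-$1$ is reachable \emph{only} when $n$ is a power of two and all-$0$ \emph{only} when it is not; this simultaneously rules out the wrong terminal consensus and the non-consensus ``mixed'' dead configurations.

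For fact~(ii) itself, I would argue that from every reachable configuration $C$, the target $C^\star$ is reachable. If $C$ has an agent in $\bot$, firing $r$ returns to $I(n)$, from which the intended halving procedure (fire $s$ until $x$ is exhausted, then $\overline{s}$, and repeat) reaches $C^\star$; this is the straightforward ``happy path'' computation. If $C$ is $\bot$-free and distinct from $C^\star$, then either one can reach a configuration containing a $\bot$ — by firing $t_1$ when some $x$ is present, or $\overline{s}$ followed by $t_1$ otherwise — and then reset as above, or every such attempt lands in a configuration of the form $(1,0,0,d)$, in which case the invariant forces $n$ to be a power of two and $t_1$ sends it directly to $C^\star=$ all-$1$. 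Either way $C^\star$ is reachable, which establishes~(ii).

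I expect the main obstacle to be the reachability analysis underpinning the invariant: arguing that \emph{no} adversarial interleaving — in particular no ill-timed use of $\overline{t_0}$, which shuffles agents between $0$ and $1$ without touching $V$ — can steer a power-of-two input into the all-$0$ sink or a non-power-of-two input into the all-$1$ sink. The guards ``$\overline{s}$ needs $a=0$'', ``$\overline{t_0}$ needs $b\geq 1$'', and ``$t_1$ needs $b=c=0$'' are exactly what make $V=n/2^{\,j}$ survive and couple ``reaching a clean $(1,0,0,d)$'' with $n$ being a power of two; pinning this down for \emph{all} runs, not just the intended one, is the crux. By contrast, the fairness step is routine once the standard criterion above is invoked.
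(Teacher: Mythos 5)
Your proposal is correct, and it takes a genuinely more rigorous route than the paper, which proves this proposition only by the informal paragraph preceding it (the intended halving procedure is correct; a mistimed $\overline{t_0}$ is harmless; any other mistimed transition is detected by sending an agent to $\bot$; fairness then forces a reset via $r$ and, again by fairness, an intended execution eventually occurs) --- no formal proof appears in the main text or the appendix. Your skeleton supplies exactly what that sketch leaves implicit: (a)~the reduction to the criterion ``from every configuration reachable from $I(n)$ some stable $\varphi(n)$-consensus is reachable'', which is sound here because the configurations of a fixed size are finitely many and fair executions settle in a bottom strongly connected component of the reachability graph (the same fact the paper invokes from \cite{esparza2017verification} in its proof of Proposition~\ref{prop:reset}); (b)~the classification of dead configurations --- every transition other than $t_0$ and $t_1$ leaves an agent in $x$, $\overline{x}$ or $\bot$, so a reachable dead configuration is created only by $t_0$ from $(1,b,c,0)$ (yielding all-$0$) or by $t_1$ from $(1,0,0,d)$ (yielding all-$1$), which disposes of the mixed dead traps $(0,0,c,d)$ in one stroke; and (c)~the invariant $a+2b=n/2^{\,j}$ on $\bot$-free runs, legitimately restricted to suffixes after the last reset since $\bot$ can only be vacated by $r$, which recreates $I(n)$. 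What your approach buys is a checkable argument covering \emph{all} interleavings, precisely the point the paper waves at with ``transitions are eventually fired as intended''; what the paper's sketch buys is brevity.

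Two small points to tighten. First, the one step you leave as ``a short case analysis'' is the only place where the invariant alone does not finish: ruling out all-$0$ when $n=2^{\,j'}$. Parity of $V=1+2b$ gives $b=0$ and $V=1$, i.e.\ a potential $t_0$-firing from $(1,0,n-1,0)$, which the invariant by itself does not exclude. It does complete along the lines you indicate: the level $V=1$ can only be entered by $\overline{s}$, which lands exactly in $(1,0,0,n-1)$ (after $\overline{s}$ both $\overline{x}$ and $0$ are empty), and at $V=1$ none of $s$, $\overline{s}$, $\overline{t_0}$ is enabled since $a=1$ and $b=0$, so agents in state $1$ can never be drained into $0$ there; hence the only $\bot$-free exit from this level is $t_1$ into all-$1$. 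Second, strictly speaking your blanket claim that $V$ has the form $n/2^{\,j}$ along every $\bot$-free run fails at the final dead configuration (where $V=0$); since you only apply the invariant at the penultimate configuration, this is cosmetic, but the statement should exclude the terminal step.
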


\parag{Leaderless broadcast protocols}
A broadcast protocol $\PP = (Q, R, B, \Sigma, L, I, O)$ is \emph{leaderless}
if $L = \vec{0}$.
It can be shown that leaderless
broadcast consensus protocols compute the same predicates as the general class.
We only sketch the argument. First, a broadcast protocol with
leader multiset $L$ can be simulated by a protocol with a single
leader. Indeed, the protocol can be designed so that the first task of
the leader is to ``recruit'' the other leaders of $L$ from among the
agents. Second, a protocol with one leader can be simulated
by a leaderless protocol because, loosely speaking, a broadcast
protocol can elect a leader in a single computation step\footnote{Unlike population protocols, where efficient leader election is
non-trivial and much studied; see e.g.~\cite{ElsasserR18}.}. Indeed,
if initially all agents are in a state, say $q$, then
a broadcast $q \mapsto \ell; f$, where $f(q) = q'$, sends exactly one
agent to leader state $\ell$, and all other agents to state $q'$. It
is simple to construct $\PP'$ using this feature, and the details are
omitted.

In the rest of the paper, we use protocols with leaders to simplify the constructions, but all results
(except Proposition~\ref{prop_single_signal}) remain valid for leaderless protocols.

\section{Broadcast consensus protocols compute exactly $\NL$}
In this section, we prove our main theorem: a predicate is computable
by a broadcast consensus protocol if{}f it is in $\NL$. We follow the
convention and say that a predicate $\varphi$ belongs to $\NL$ if
there is a nondeterministic Turing machine that accepts in $\bigO(\log
n)$-space exactly the tuples $(x_1, x_2, \ldots, x_k) \in \N^k$,
encoded in unary, such that $\varphi(x_1, x_2, \ldots, x_k)$ holds.

The proof is divided in two parts. Section~\ref{sec:easy} proves the
easier direction: predicates computable by broadcast consensus
protocols are in $\NL$. Section~\ref{sec:hard} proves the converse,
which is more involved.

\subsection{Predicates computable by broadcast consensus protocols are in $\NL$}
\label{sec:easy}

We prove the result in more generality. We define a generic computational
model in which the possible steps between configurations are given
by an arbitrary relation preserving the number of agents. Formally,
a \emph{generic consensus protocol} is a tuple $\PP = (Q, \Step, \Sigma, L, I, O)$ where
$Q, \Sigma, L, I, O$ are defined as for population protocols, and
$\Step \subseteq \pop{Q} \times \pop{Q}$ is the \emph{step
relation} between populations, satisfying $|C| = |C'|$ for every
$(C, C') \in \Step$.

Clearly, broadcast consensus protocols are generic consensus protocols. Further,
it is easy to see that if $\Step$ is the one-step relation of a
broadcast protocol, then $\Step \in \NL$. Indeed, $\Step \in \NL$ if
there is a nondeterministic Turing machine that given a pair of
configurations $(C, C')$ with $n$ agents, uses $\bigO(\log n)$ space and
accepts if{}f $(C, C') \in \Step$.  A quick inspection of the two
conditions in the definition of $\Step$
(Definition \ref{def:broadcast}) shows that this is the case.

Thus, it suffices to prove that generic consensus protocols satisfying $\Step \in \NL$ can only compute predicates in $\NL$. We sketch the proof, more details can be found in the Appendix.

\begin{restatable}{proposition}{propGenUpperBound}\label{prop:gen:upper:bound}
  Let $\PP = (Q, \Step, \Sigma, L, I, O)$ be a generic consensus protocol
  computing a predicate $\varphi$. If $\Step \in \NL$,
  then $\varphi \in \NL$. In particular, predicates computable by broadcast consensus protocols are in $\NL$.
\end{restatable}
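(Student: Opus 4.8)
The plan is to show that a generic consensus protocol computing $\varphi$ can be simulated by an $\NL$ machine that, on input $\vec{\alpha} = (x_1,\ldots,x_k)$ in unary, decides $\varphi(\vec\alpha)$. The key observation is that computation-by-consensus is fundamentally a reachability question on the configuration graph, and reachability is the canonical $\NL$-complete problem. So the heart of the argument is to reduce ``$\varphi(\vec\alpha) = b$'' to a polynomially-many-bounded reachability query that can be answered in logarithmic space, using the hypothesis $\Step \in \NL$ to check individual edges on the fly.

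\textbf{Setup and space bounds.} First I would fix the input $\vec\alpha$ with $n \defeq |\vec\alpha| + |L|$ agents and note that every reachable configuration $C$ satisfies $|C| = n$ (the step relation preserves population size), so $C$ is a vector in $\N^Q$ whose entries are at most $n$. Each such configuration is describable in $\bigO(|Q| \log n) = \bigO(\log n)$ space, since $|Q|$ is a constant of the protocol. The initial configuration $I(\vec\alpha) + L$ is likewise computable in logarithmic space. This gives me a configuration graph whose vertices have logarithmic description length and whose edge relation is exactly $\Step$, decidable in $\NL$ by hypothesis.

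\textbf{Characterizing the output via reachability.} The crucial step is to translate the definition of ``$\PP$ computes $\varphi$'' into a graph-theoretic condition. Since every fair execution stabilizes to $\varphi(\vec\alpha)$, the correct value $b = \varphi(\vec\alpha)$ is the unique output such that, from the initial configuration, every configuration reachable from \emph{every} reachable configuration can reach a $b$-consensus that is moreover \emph{stable} (no outgoing step leaves the set of $b$-consensus configurations). Concretely, I would argue that $\varphi(\vec\alpha) = b$ if{}f there exists a configuration $D$ reachable from $I(\vec\alpha)+L$ such that $D$ is a stable $b$-consensus and every configuration reachable from the initial one can in turn reach a stable $b$-consensus; by the consensus-computation semantics and fairness exactly one value $b$ satisfies this. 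Each of these sub-conditions is a (co-)reachability statement or a bounded alternation of reachability and a local ``stable $b$-consensus'' check on a graph with logarithmically-encoded vertices.

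\textbf{Assembling the machine and the obstacle.} I would then build the $\NL$ procedure: guess and verify the relevant reachability witnesses nondeterministically, reusing $\NL$'s closure properties — in particular $\NL = \coNL$ (Immerman--Szelepcsényi) to handle the universal quantifier over reachable configurations — and invoke the oracle $\Step \in \NL$ to test each edge, composing $\NL$ predicates so the whole computation stays in logarithmic space. \textbf{The main obstacle} I anticipate is making the reachability characterization of the consensus value airtight: I must verify that the semantics ``every fair execution stabilizes to $b$'' really is equivalent to the bottom-strongly-connected-component condition that every reachable configuration can reach a stable $b$-consensus, and that exactly one $b$ qualifies (otherwise the protocol would not compute a well-defined predicate). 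This requires a careful fairness argument showing that from any reachable configuration the execution is eventually trapped in a bottom SCC of the configuration graph, and that all such bottom SCCs consist of stable $b$-consensus configurations for the same $b$. Once this equivalence is established, the remaining bookkeeping — encoding configurations, composing $\NL$ oracle calls, and applying $\NL = \coNL$ — is routine.
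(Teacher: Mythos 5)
Your proposal is correct and follows the same overall route as the paper's proof: encode the size-preserving configurations of a fixed input in $\bigO(\log n)$ space, reduce the question to reachability in the configuration graph with individual edges tested via the hypothesis $\Step \in \NL$, and invoke $\NL = \coNL$ to handle the universal ``stability'' check. The one genuine difference is your characterization of the output value. The paper uses only the existential half: $\varphi(\vec{x}) = 1$ if and only if some configuration $C$ with $I(\vec{x}) + L \trans{*} C$ satisfies that every configuration reachable from $C$ (including $C$ itself) is a $1$-consensus. The converse direction is then disposed of in one line by the hypothesis that $\PP$ \emph{computes} $\varphi$: extending $C_0 \trans{*} C$ to a fair execution yields a fair execution stabilizing to $1$, and since all fair executions stabilize to the same value $\varphi(\vec{x})$, that value must be $1$. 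This makes your additional universal clause (``every configuration reachable from the initial one can reach a stable $b$-consensus'') and, with it, the bottom-SCC/fairness uniqueness analysis you flag as the main obstacle, unnecessary: well-definedness of the computed predicate is an assumption, not something the $\NL$ machine must verify, so ``exactly one $b$ qualifies'' comes for free, and the extra $\forall\exists\forall$ quantifier alternation over $\NL$-checkable relations (which would force you through a composition of $\NL$ oracle calls, sound but heavier machinery) never arises. What remains in both arguments is the forward direction, which you sketch correctly: from a fair execution stabilizing to $1$ one extracts such a $C$ using finiteness of the size-$n$ configuration space plus fairness (the set of configurations occurring infinitely often is closed under steps and eventually consists of $1$-consensuses). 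One small caution: make sure ``stable $b$-consensus'' means ``only $b$-consensuses are reachable from $D$'' rather than the literal one-step reading of your parenthetical — a one-step successor check is too weak — though your plan to apply $\NL = \coNL$ to the universal quantifier over reachable configurations indicates you intend the correct, reachability-closed notion.
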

\begin{proof}
  We show that there is a nondeterministic Turing machine that decides
  whether $\varphi(\vec{x}) = 1$ holds, and uses $\bigO(\log
  |\vec{x}|)$ space. Let $G = (V, E)$ be the graph where $V$ is the
  set of all configurations of $\PP$ of size $|\vec{x}|$, and $(C, C')
  \in E$ if{}f $C \trans{} C'$.

  It is easy to see that $\varphi(\vec{x}) = 1$ if{}f $G$ contains a
  configuration $C$ of size $|C|=|I(\vec{x})| = |\vec{x}|$
  satisfying (1) $C_0 \trans{*} C$;\label{prop:reach} and (2) every
  configuration reachable from $C$, including $C$ itself, is a
  $1$-consensus. Therefore, we can decide $\varphi(\vec{x}) = 1$ by
  guessing $C$, and checking~(1) and~(2) in $\bigO(\log
  |I(\vec{x})|)$ space.  For~(1), this follows from the fact that
  graph reachability is in \NL. For (2), we observe that determining
  whether some configuration reachable from $C$ is not a $1$-consensus
  can be done in \NL, and we use the fact that $\NL =
  \coNL$~\cite{Imm88}.
  \end{proof}

\begin{remark}
Protocols with absence detector~\cite{MichailS15} are a class of generic consensus protocols, and hence Proposition~\ref{prop:gen:upper:bound} can be used to give an alternative proof of the fact that these protocols only compute predicates in $\NL$.
\end{remark}

\subsection{Predicates in $\NL$ are computable by broadcast consensus protocols}
\label{sec:hard}

The proof is involved, and we start by describing its structure. In
Section~\ref{sub:semi}, we show that it suffices to prove that every
predicate in $\NL$ is \emph{silently
semi-computable}. In the rest of the section, we proceed to prove this
in three steps. Loosely speaking, we show that:
\begin{itemize}
\item predicates computable by nondeterministic Turing machines in
  $\bigO(n)$ space can also be computed by counter machines with
  counters polynomially bounded in $n$ (Section
  \ref{subsec:simu1});

\item predicates computed by polynomially bounded counter machines can
  also be computed by $n$-bounded counter machines, i.e.\ in which the
  sum of the values of all counters never exceeds their initial sum
  (Section~\ref{subsec:simu2});

\item predicates computed by $n$-bounded counter machines can be
  silently semi-computed by broadcast protocols.
  (Section~\ref{subsec:simu3}).
\end{itemize}
\noindent Finally, Section~\ref{subsec:main} puts all parts of the proof
together.

\subsubsection{Silent semi-computation}\label{sub:semi}

Recall that, loosely speaking, a protocol computes $\varphi$ if it
converges to $1$ for inputs that satisfy $\varphi$, and it converges
to $0$ for inputs that do not satisfy $\varphi$. Additionally, a
protocol \emph{silently computes} $\varphi$ if convergence to $b \in
\{0,1\}$ happens by reaching a \emph{terminal $b$-consensus}, i.e., a
configuration $C$ that is a $b$-consensus and from which one can only
reach $C$ itself. (Intuitively, the protocol eventually becomes
``silent'' because no agent changes state anymore, and hence
communication ``stops''.)  We say that a protocol \emph{silently
  semi-computes} $\varphi$ if it reaches a terminal $1$-consensus for
inputs that satisfy $\varphi$, and no terminal configuration for other
inputs.

\begin{definition}\label{definition:silently-semicompute}
  A broadcast consensus protocol $\PP$ \emph{silently semi-computes} a $k$-ary
  predicate $\varphi$ if for every $\vec{\alpha} \in \N^k$ the
  following properties hold:
  \begin{enumerate}
  \item if $\varphi(\vec{\alpha}) = 1$, then every fair execution of
    $\PP$ starting at $I(\vec{\alpha})$ eventually reaches a terminal
    $1$-consensus configuration;\label{itm:semi:1}

  \item if $\varphi(\vec{\alpha}) = 0$, then no fair execution of $\PP$
  starting at $I(\vec{\alpha})$ eventually reaches a terminal configuration.\footnote{Since every finite execution can be extended to a fair one, this condition is actually equivalent to ``no terminal configuration is reachable from $I(\vec{\alpha})$''.}\label{itm:semi:2}
  \end{enumerate}
\end{definition}

We show that if a predicate and its complement are
both silently semi-computable by broadcast consensus protocols, say $\PP_1$ and
$\PP_0$, then the predicate is also computable by a broadcast consensus protocol
$\PP$ which, intuitively,
behaves as follows under input $\vec{\alpha}$. At every moment in time, $\PP$ is simulating
either $\PP_1$ or $\PP_0$. Initially, $\PP$ simulates $\PP_0$. Assume
$\PP$ is simulating $\PP_i$ and the current configuration is $C$. If
$C$ is a terminal configuration of $\PP_i$, then $\PP$ terminates
too. Otherwise, $\PP$ nondeterministically chooses one of three options: continue thesimulation of $\PP_i$, ``reset'' the computation to $I_{0}(\vec{\alpha})$,
i.e., start simulating $\PP_0$, or ``reset'' the computation to $I_{1}(\vec{\alpha})$. Conditions~\ref{itm:semi:1}
and~\ref{itm:semi:2} ensure that exactly one of $\PP_0$ and $\PP_1$
can reach a terminal configuration, namely
$\PP_{\varphi(\vec{\alpha})}$. Fairness ensures that $\PP$ will
eventually reach a terminal configuration of
$\PP_{\varphi(\vec{\alpha})}$, and so, by condition~\ref{itm:semi:1},
that it will always reach the right consensus. Hence, $\PP$ silently computes
$\varphi$.

The ``reset'' is implemented by means of a broadcast that sends every
agent to its initial state in the configuration
$I_j(\vec{\alpha})$; for this, the states of $\PP$ are partitioned
into classes, one for each input symbol $x \in X$. Every agent moves
only within the states of one of the classes, and so every agent
``remembers'' its initial state in both $\PP_0$ and $\PP_1$.

\begin{lemma}\label{lemma:semi-to-nonsemi}
  Let $\varphi$ be an $m$-ary predicate, and let $\overline{\varphi}$
  be the predicate defined by $\overline{\varphi}(\vec{\alpha}) \defeq
  1 - \varphi(\vec{\alpha})$ for every $\vec{\alpha} \in
  \Nat^m$. Further let $\PP_1$ and $\PP_0$ be broadcast consensus
  protocols that silently semi-compute $\varphi$ and
  $\overline{\varphi}$, respectively. The following holds: there
  exists a broadcast consensus protocol $\PP$ that silently computes
  $\varphi$.
\end{lemma}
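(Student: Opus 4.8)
The plan is to construct $\PP$ explicitly by taking (roughly) the disjoint union of the state spaces of $\PP_0$ and $\PP_1$, adding reset broadcasts that send the whole population back to an initial configuration, and arranging the output function so that the decisive consensus is the one reached silently. The key design constraint, flagged in the prose before the lemma, is that a reset must restore the \emph{same} input $\vec{\alpha}$; since $I_0(\vec{\alpha})$ and $I_1(\vec{\alpha})$ need not place agents in the same states, I would partition the states of $\PP$ into classes indexed by the input symbols $A_1,\ldots,A_m$, so that an agent originating from input symbol $A_i$ always stays inside the $i$-th class. Concretely, for each $i$ I would keep two copies of the reachable states of $\PP_0$ and $\PP_1$ tagged with $i$, together with a bit recording which protocol is currently being simulated. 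Rendez-vous and broadcast transitions of $\PP_0$ and $\PP_1$ are imported into $\PP$ acting within the matching simulated-protocol tag; a reset-to-$j$ broadcast is a single transfer function $f$ that maps every tagged state $(A_i,\cdot)$ to the state $I_j(A_i)$ (tagged $i$, tagged $j$), so each agent returns to the initial state dictated by its own input symbol in $\PP_j$.

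Next I would define the output of $\PP$ to be $1$ on every state tagged as belonging to the simulation of $\PP_1$ \emph{and} marked terminal, and $0$ otherwise, and verify that the terminal configurations of $\PP$ are exactly the images (under the tagging) of the terminal configurations of $\PP_0$ and $\PP_1$. The correctness argument then runs as sketched in the text: by Definition~\ref{definition:silently-semicompute}, exactly one of $\PP_0,\PP_1$ can reach a terminal configuration from $I(\vec{\alpha})$, namely $\PP_{\varphi(\vec{\alpha})}$. Whenever $\PP$ is in a non-terminal configuration I allow it to nondeterministically fire either a genuine simulated step of the currently-active $\PP_i$, a reset-to-$0$, or a reset-to-$1$; I must check these three options are always enabled so that the step relation never deadlocks except at a simulated terminal configuration.

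The heart of the proof is the fairness argument, and this is the step I expect to be the main obstacle. I need to show that every fair execution of $\PP$ eventually reaches, and then stays in, a terminal $\varphi(\vec{\alpha})$-consensus. The upward direction is that from any reachable configuration the reset-to-$\varphi(\vec{\alpha})$ broadcast is enabled, leading to $I_{\varphi(\vec{\alpha})}(\vec{\alpha})$, from which by condition~\ref{itm:semi:1} a terminal $\varphi(\vec{\alpha})$-consensus is reachable; since this finite witnessing step sequence becomes enabled infinitely often along any infinite execution, fairness forces the terminal configuration to be visited. I then argue that the terminal configuration of $\PP_{\varphi(\vec{\alpha})}$, lifted into $\PP$, is genuinely terminal in $\PP$ as well: here I must be careful that the added reset broadcasts do not accidentally re-enable motion out of it. The natural fix is to disable resets precisely at configurations that are terminal for the active simulation, i.e.\ to make ``reset'' a transition guarded by the presence of at least one non-silent simulated step; then a configuration terminal in $\PP_i$ has no outgoing transition in $\PP$ at all, so it is terminal and (being a $1$-consensus exactly when $i=1$) witnesses the correct output. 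The remaining subtlety, which condition~\ref{itm:semi:2} resolves, is ruling out the possibility that $\PP$ locks into simulating the \emph{wrong} protocol $\PP_{1-\varphi(\vec{\alpha})}$ forever: since that protocol has no reachable terminal configuration, the reset transitions stay enabled along its simulation, so by fairness $\PP$ must eventually reset and, again by fairness, eventually commit to $\PP_{\varphi(\vec{\alpha})}$ and reach its terminal $\varphi(\vec{\alpha})$-consensus.
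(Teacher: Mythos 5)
Your construction is essentially the paper's own proof: origin-tagged states $\Sigma \times (Q_1 \cup Q_0 \cup \{\textit{reset}\})$ so that a reset broadcast can restore $I_j(\vec{\alpha})$ symbol by symbol, reset initiation offered only as a nondeterministic alternative to an enabled simulation step (which is exactly how the paper implements your ``guard'', ensuring lifted terminal configurations remain terminal in $\PP$), and the same fairness argument using conditions~1 and~2 of the definition of silent semi-computation. One small slip: no state can be ``marked terminal'' (terminality is a property of configurations, not states, and no transition fires at a terminal configuration to do the marking), but the intended output function is obtained by simply letting $O$ agree with $O_1$ on $\PP_1$-tagged states and equal $0$ elsewhere, since every reachable terminal configuration of $\PP_1$ is a $1$-consensus under $O_1$.
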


\begin{proof}
  Let $\PP_1 = (Q_1, R_1, B_1, \Sigma, I_1, O_1)$ and $\PP_0 =
  (Q_0, R_0, B_0, \Sigma, I_0, O_0)$ be protocols that silently
  semi-compute $\varphi$ and $\overline{\varphi}$,
  respectively. Assume w.l.o.g.\ that $Q_1$ and $Q_0$ are disjoint. We
  construct a protocol $\PP = (Q, R, B, \Sigma, \allowbreak I, O)$
  that computes $\varphi$.

For the sake of clarity we refrain from giving a fully formal
description, but we provide enough details to show that the design
idea above can indeed be implemented.

\parag{States and mappings} The set of states of $\PP$ is defined as:
$$Q \defeq \Sigma \times
  (Q_1 \cup Q_0 \cup \{ \reset \})$$
  If an agent is in state $(x, q)$, we say that
$x$ is its \emph{origin} and that $q$ is its \emph{position}. The
initial position of an agent is its initial state in $\PP_0$, i.e.
$I(x) \defeq (x, I_0(x))$. Transitions will be designed so that
agents may update their position, but not their origin.
Alternatively, instead of applying a transition, agents can nondeterministically choose to
transition from $(x, q) \in X \times (Q_1 \cup Q_0)$ to $(x, \reset)$.
An agent in state $(x, \reset)$ eventually resets the simulation to either $\PP_0$ or $\PP_1$.

\parag{Simulation transitions} We define transitions that proceed with
the simulation of $\PP_0$ and $\PP_1$ as follows. For every $i \in
\{1, 0\}$, every $x, y \in \Sigma$, and every rendez-vous transition $(q,
r) \mapsto (q', r')$ of $R_i$, we add the following
rendez-vous transitions to $R$:
$$(x, q), (y, r)  \mapsto (x, q'), (y, r') \qquad \mbox{ and } \qquad (x, q), (y, r)  \mapsto (x, q'), (y, r')  .$$
The first transition implements the simulation, while the second transition enables resets when the simulation has not reached a terminal configuration.
For every broadcast transition $q \mapsto q'; f$ of $B_i$ and
every $x \in \Sigma$, we add the following broadcast transitions to $B$:
\begin{align*}
(x, q) & \mapsto (x, q'); f' \\
(x, q) & \mapsto (x, \reset); f'
\end{align*}
where $f'$ only acts on $Q_i$ by $f'(y, r) \defeq (y, f(r))$ for every
$(y, r) \in \Sigma \times Q_i$.
The first transition implements the simulation of a broadcast in the original protocols, while the second transition enables a reset.

\parag{Reset transitions} We define transitions that trigger a new
simulation of either $\PP_0$ or $\PP_1$. For every $i \in \{1, 0\}$, let $f_i \colon
Q \to Q$ be the function defined as $f_i(x, q) \defeq (x, I_i(x))$ for
every $(x, q) \in Q$. For every $i \in \{1, 0\}$ and every $x \in \Sigma$, we add the following broadcast
transition to $B$:
$(x, \reset) \mapsto (x, I_i(x)); f_i$.
\end{proof}

Using Lemma \ref{lemma:semi-to-nonsemi}, we may now prove the following:

\begin{proposition}\label{prop:fromsemitocomp}
  If every predicate in $\NL$ is silently
  semi-computable by broadcast consensus protocols, then every predicate
  in $\NL$ is silently computable (and so computable) by broadcast consensus protocols.
\end{proposition}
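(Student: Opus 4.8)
The plan is to reduce the statement directly to Lemma~\ref{lemma:semi-to-nonsemi}. That lemma manufactures, from two protocols silently semi-computing a predicate $\varphi$ and its complement $\overline{\varphi}$, a single protocol that silently computes $\varphi$. So given an arbitrary $\varphi \in \NL$, the only task is to exhibit broadcast consensus protocols that silently semi-compute both $\varphi$ and $\overline{\varphi}$, and then invoke the lemma. By the hypothesis of the proposition, every predicate in $\NL$ is silently semi-computable, so $\varphi$ itself is immediately covered.

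The one point that requires care is that $\overline{\varphi}$ must also lie in $\NL$ for the hypothesis to apply to it. This is precisely where the Immerman--Szelepcs\'enyi theorem enters: since $\NL = \coNL$~\cite{Imm88}, the complement of any predicate in $\NL$ is again in $\NL$. Thus $\overline{\varphi} \in \NL$, so by hypothesis both $\varphi$ and $\overline{\varphi}$ are silently semi-computable by broadcast consensus protocols $\PP_1$ and $\PP_0$, and Lemma~\ref{lemma:semi-to-nonsemi} yields a broadcast consensus protocol $\PP$ that silently computes $\varphi$. It remains to justify the parenthetical ``(and so computable)'': if $\PP$ silently computes $\varphi$, then on every input $\vec{\alpha}$ every fair execution reaches a terminal $\varphi(\vec{\alpha})$-consensus, i.e.\ a $\varphi(\vec{\alpha})$-consensus $C$ from which only $C$ is reachable; from that point on the execution is constantly $C$, hence it stabilizes to $\varphi(\vec{\alpha})$, so $\PP$ computes $\varphi$ in the usual sense.

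I do not expect a genuine obstacle in this step: the substantive construction has already been carried out in Lemma~\ref{lemma:semi-to-nonsemi}, and all that this proposition contributes is the observation that the hypothesis must be applied not only to $\varphi$ but also to $\overline{\varphi}$. The ``main obstacle'' is therefore purely conceptual, namely recognizing that the closure of $\NL$ under complement is exactly what is needed to feed both a predicate and its complement into the lemma; once $\NL = \coNL$ is invoked, the argument is immediate.
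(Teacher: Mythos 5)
Your proof is correct and follows essentially the same route as the paper's: invoke $\NL = \coNL$ (Immerman--Szelepcs\'enyi) to place $\overline{\varphi}$ in $\NL$, apply the hypothesis to both $\varphi$ and $\overline{\varphi}$, and conclude via Lemma~\ref{lemma:semi-to-nonsemi}. Your explicit justification of the parenthetical ``(and so computable)'' is a small addition the paper leaves implicit, and is a welcome touch.
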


\begin{proof}
  Assume every predicate in $\NL$ is silently
  semi-computable by broadcast consensus protocols, and let $\varphi$ be a
  predicate in $\NL$. We resort to the powerful result
  stating that predicates in $\coNL$ and $\NL$
  coincide. This is an immediate corollary of the $\coNL
  = \NL$ theorem for
  languages~\cite{Imm88,Szelepcsenyi88,Papadimitriou07}, and the fact
  that one can check in constant space whether a given word encodes a vector of natural numbers of fixed arity. Thus, both $\varphi$ and
  $\overline{\varphi}$ are predicates in $\NL$, and so,
  by assumption, silently semi-computable by broadcast consensus protocols. By
  Lemma~\ref{lemma:semi-to-nonsemi}, they are silently computable by broadcast
  consensus protocols.
\end{proof}

\subsubsection{Simulation of Turing machines by counter machines}\label{subsec:simu1}

We recall that nondeterministic Turing machines working in $\bigO(n)$
space can be simulated by counter machines whose counters are
polynomially bounded in $n$, and so that both models compute the same
predicates.

Let $X = \{x_1, x_2, \ldots, x_k\}$ and $\textit{Ins} = \{\incr(x),
\decr(x), \zero(x), \nonzero(x), \nop \mid x \in X\}$.  A
\emph{$k$-counter machine} $\M$ over \emph{counters} $X$ is a tuple
$(Q, X, \Delta, m, q_0, q_a, q_r)$, where $Q$ is a finite set of
\emph{control states}; $\Delta \subseteq Q \times \textit{Ins} \times
Q$ is the \emph{transition relation}; $m \leq k$ is the \emph{number
  of input counters}; and $q_0, q_a, q_r$ are the \emph{initial},
\emph{accepting}, and \emph{rejecting states}, respectively.

A configuration of $\M$ is a pair $C = (q, \vec{v}) \in Q \times \N^k$
consisting of a control state $q$ and counter values $\vec{v}$. For
every $i \in [k]$, we denote the value of counter $x_i$ in $C$ by
$C(x_i) \defeq \vec{v}_i$. The \emph{size} of $C$ is $|C| \defeq
\sum_{i=1}^k C(x_i)$.

\newcommand{\etrans}[1]{\trans{\mathmakebox[17pt]{#1}}}
\newcommand{\ins}{\textit{ins}}
Let $\vec{e}_i$ be the $i$-th row of the $k \times k$ identity
matrix.  Given $\ins \in \textit{Ins}$, we define
the relation $\trans{\ins}$ over configurations as follows: $(q, \vec{v}) \trans{\ins} (q', \vec{v}')$
if{}f $(q, \ins, q')  \in \Delta$ and one of the following holds:  $\ins= \incr(x_i) $ and $\vec{v}' = \vec{v} + \vec{e}_i$;  $\ins = \decr(x_i)$, $\vec{v}_i > 0$, and $\vec{v}' =\vec{v} - \vec{e}_i$; $\ins = \zero(x_i)$, $\vec{v}_i = 0$, and $\vec{v}' = \vec{v}$; $\ins = \nonzero(x_i)$, $\vec{v}_i > 0$, and $\vec{v}' = \vec{v}$; $\ins = \nop$ and $\vec{v}' = \vec{v}$.

For every $\vec{\alpha} \in \N^m$, the initial configuration of $\M$
with input $\vec{\alpha}$ is defined as:
\begin{align*}
  C_{\vec{\alpha}} &\defeq (q_0, (\vec{\alpha}_1, \vec{\alpha}_2,
  \ldots, \vec{\alpha}_m, \underbrace{0, \ldots, 0}_{k -
    m~\text{times}})).
\end{align*}
We say $\M$ \emph{accepts} $\vec{\alpha}$ if there exist counter
values $\vec{v} \in \N^k$ satisfying $C_{\vec{\alpha}} \trans{*} (q_a,
\vec{v})$. We say $\M$ \emph{rejects} $\vec{\alpha}$ if $M$ does not
accept $\vec{\alpha}$ and for all configurations $C'$ with
$C_{\vec{\alpha}} \trans{*} C'$, there exists $\vec{v}
\in \N^k$ satisfying $C' \trans{*} (q_r, \vec{v})$. We say $\M$
\emph{computes} a predicate $\varphi \colon \N^m \to \{0, 1\}$
if $\M$ accepts all inputs $\vec{\alpha}$ such that
$\varphi(\vec{\alpha}) = 1$, and rejects all $\vec{\alpha}$ such that
$\varphi(\vec{\alpha}) = 0$.

A counter machine $\M$ is \emph{$f(n)$-bounded} if $|C| \leq
f(|C_{\vec{\alpha}}|)$ holds for every initial configuration
$C_{\vec{\alpha}}$ and every configuration $C$ reachable from
$C_{\vec{\alpha}}$. It is well-known that counter machines can
simulate Turing machines:

\begin{theorem}[{\cite[Theorem 3.1]{FMR68}}]\label{thm:TM-CM}\label{thm:FMR68}
  A predicate is computable by an $s(n)$-space-bounded Turing
  machine if{}f it is computable by a $2^{s(n)}$-bounded counter
  machine.
\end{theorem}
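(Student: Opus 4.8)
The plan is to prove the two directions separately, in both cases relying on the standard Minsky-style correspondence between binary-encoded tape contents and bounded counter values. Throughout I treat $n$ as the common input-size parameter, so that $s(n)$ is the Turing machine's space bound and the counter machine is $2^{s(n)}$-bounded in the sense of the definition, $|C| \le 2^{s(|C_{\vec\alpha}|)}$.

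For the direction from counter machines to Turing machines, I would start from a $2^{s(n)}$-bounded $k$-counter machine $\M$ and build a Turing machine that stores, on its worktape, the control state of $\M$ together with the current values of all $k$ counters written in binary. Since every reachable configuration satisfies $|C| \le 2^{s(n)}$, each individual counter value is at most $2^{s(n)}$ and hence fits in $s(n)$ bits; the $k$ counters therefore occupy $\bigO(s(n))$ cells. Each instruction of $\M$ then translates into an elementary binary operation on these blocks — $\incr$ and $\decr$ are binary increment and decrement, $\zero$ and $\nonzero$ are tests for an all-zero block, and $\nop$ does nothing — all computable within the same $\bigO(s(n))$ space, and the nondeterministic branching of $\M$ is mirrored by nondeterministic branching of the machine. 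The machine accepts exactly when $\M$ reaches $q_a$, so it computes the same predicate in $\bigO(s(n))$ space. (Reachability of $C_{\vec\alpha}$ forces $|\vec\alpha| \le 2^{s(n)}$, i.e.\ $s(n) = \Omega(\log n)$, so the input values indeed fit in the allotted space.)

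For the converse, I would encode a configuration of an $s(n)$-space Turing machine $T$ into counters. Splitting the tape at the head, the contents to the left and to the right are two strings of length at most $s(n)$ over the tape alphabet $\Gamma$; reading each as a number in base $|\Gamma|$ gives two values bounded by $|\Gamma|^{s(n)} = 2^{\bigO(s(n))}$. I keep one counter for each half, store the control state and scanned symbol in the finite control of the counter machine, and simulate a single step of $T$ by (i) extracting the low-order digit of the relevant counter via a division with remainder by $|\Gamma|$, (ii) updating it according to $T$'s transition, and (iii) multiplying or dividing by $|\Gamma|$ to move the head. Each arithmetic step is implemented by the usual Minsky idioms with a constant number of auxiliary counters, whose intermediate values stay within $2^{\bigO(s(n))}$. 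Taking $\Gamma$ binary and folding the head position into the encoding lets one tighten the bound to $2^{s(n)}$; alternatively one observes that $2^{s(n)}$- and $2^{\bigO(s(n))}$-bounded counter machines define the same predicate classes, since an exponent constant can be absorbed by rescaling $T$'s space, so the precise base is immaterial.

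I expect the main obstacle to be this forward direction, and specifically steps (ii)--(iii): realizing digit extraction, in-place digit update, and head movement purely through increments, decrements and zero-tests while certifying that no auxiliary counter ever exceeds the global bound. The subtlety is that the constraint is on the \emph{sum} of all counters, so any register used to copy or reconstruct a value must be fully drained before the next simulated step begins. A secondary point of care is matching the paper's acceptance/rejection semantics: since $\M$ must \emph{reject} $\vec\alpha$ by having every reachable configuration able to reach $q_r$, I would make the simulation faithful enough that each computation branch of $\M$ corresponds to a halting branch of the nondeterministic machine $T$, so that ``$T$ rejects on every branch'' translates exactly into ``every reachable configuration of $\M$ can reach $q_r$''.
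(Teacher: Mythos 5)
The paper does not actually prove this statement: it cites it as Theorem~3.1 of Fischer, Meyer and Rosenberg~\cite{FMR68}, remarking only that the original deterministic proof adapts easily to the nondeterministic setting. Your sketch is correct and reconstructs precisely that classical argument --- binary-encoded counters on an $\bigO(s(n))$-size worktape in one direction, tape halves read as base-$|\Gamma|$ counter values manipulated by Minsky-style division and multiplication in the other --- and it rightly flags the two points the citation glosses over (draining auxiliary counters so the \emph{sum}-boundedness holds, and translating nondeterministic rejection into the every-reachable-configuration-can-reach-$q_r$ semantics), with the constant-in-the-exponent issue being immaterial for the paper's use, where only the corollary for polynomially bounded machines is needed.
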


In~\cite{FMR68}, a weaker version of Theorem~\ref{thm:TM-CM} is proven
that applies to deterministic Turing and counter machines only. However,
the proof can be easily adapted to the nondeterministic setting we
consider here.

\begin{corollary}
  A predicate is in $\NL$ if{}f it is computable by a
  polynomially bounded counter machine.
\end{corollary}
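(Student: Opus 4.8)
The plan is to derive the corollary directly from Theorem~\ref{thm:TM-CM} by instantiating the space bound to $s(n) = \bigO(\log n)$ and observing that $2^{\bigO(\log n)} = n^{\bigO(1)}$ is exactly the class of polynomial bounds. Before applying the theorem, I would first pin down the meaning of the parameter $n$ in the two models so that they can be identified. For an input $\vec{\alpha} \in \N^m$ encoded in unary, the length of the encoding is $\Theta(|\vec{\alpha}|)$, which is also, up to a multiplicative constant, the size $|C_{\vec{\alpha}}| = \sum_{i=1}^m \vec{\alpha}_i$ of the initial configuration of the counter machine. Hence the ``$n$'' measuring Turing-machine space and the ``$n$'' measuring the counter-machine bound agree up to a constant factor, and I can pass freely between them without affecting whether a bound is logarithmic or polynomial.

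For the forward direction, suppose $\varphi \in \NL$. By the convention fixed at the start of this section, there is a nondeterministic Turing machine that decides $\varphi$ on unary inputs in space $s(n) = \bigO(\log n)$. Theorem~\ref{thm:TM-CM} then yields a counter machine computing $\varphi$ that is $2^{s(n)}$-bounded; since $2^{\bigO(\log n)} = n^{\bigO(1)}$, this counter machine is polynomially bounded, as required. For the converse, suppose $\varphi$ is computed by an $f(n)$-bounded counter machine for some polynomial $f$, say $f(n) \leq n^c$ for large $n$ and a constant $c$. Then $f(n) = 2^{\bigO(\log n)}$, i.e.\ $f$ has the form $2^{s(n)}$ with $s(n) = \bigO(\log n)$, and applying Theorem~\ref{thm:TM-CM} in the reverse direction produces an $\bigO(\log n)$-space nondeterministic Turing machine for $\varphi$, witnessing $\varphi \in \NL$.

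I expect the substance of the argument to be this single substitution $s(n) = \Theta(\log n)$; the only genuine care is bookkeeping rather than a real obstacle. Specifically, I would make sure that the equivalence in Theorem~\ref{thm:TM-CM} is used in its nondeterministic form (the text immediately after the theorem remarks that the proof of~\cite{FMR68} adapts to the nondeterministic setting), and that the notion of ``computes a predicate'' lines up across the two models. In particular, the counter-machine rejection condition---no accepting computation exists and every reachable configuration can still reach $q_r$---must correspond to the nondeterministic Turing machine failing to accept on all branches, which is the standard $\NL$ acceptance/rejection semantics. Once these identifications are in place, the corollary is simply the special case $s(n) = \Theta(\log n)$ of Theorem~\ref{thm:TM-CM}, together with the identity $2^{\Theta(\log n)} = n^{\bigO(1)}$.
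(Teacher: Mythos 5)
Your proof is correct and matches the paper's intended argument: the corollary is stated without proof precisely because it is the direct instantiation $s(n) = \Theta(\log n)$ of Theorem~\ref{thm:TM-CM} together with $2^{\Theta(\log n)} = n^{\bigO(1)}$, using the nondeterministic adaptation noted right after the theorem. Your extra bookkeeping (identifying the unary input length with $|C_{\vec{\alpha}}|$ and aligning the acceptance/rejection semantics) is exactly the kind of care the paper leaves implicit.
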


\subsubsection{Simulation of polynomially bounded counter machines by $n$-bounded counter machines}\label{subsec:simu2}

\begin{restatable}{lemma}{lemPolyInput}\label{lem:poly:input}
  For every polynomially bounded counter machine that computes some
  predicate $\varphi$, there exists an $n$-bounded counter machine
  that computes $\varphi$.
\end{restatable}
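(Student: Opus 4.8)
The plan is to simulate the given polynomially bounded machine $\M$ by an $n$-bounded machine $\M'$ that stores each counter of $\M$ in a compressed positional form. Let $p$ be the polynomial bounding $\M$ and fix a constant $d$ with $p(n)\le n^{d}$ for all large $n$ (the finitely many inputs with $n$ below the threshold are decided directly by the finite control after reading them off the input counters). I represent each of the $k$ counters $x_{j}$ of $\M$ by $d+1$ \emph{digit counters} holding the base-$b$ digits of its value, each kept in the range $[0,b)$, and I maintain one reference counter $\mathsf{Base}$ holding $b$. The base is taken to be $b\defeq\lfloor n/c\rfloor$ with $c\defeq k(d+1)+1$, and this choice is the heart of the matter: on one hand $b^{\,d+1}>p(n)$ for large $n$, so every value reachable in $\M$ fits in $d+1$ digits; on the other hand the total number of tokens in use—the $b$ tokens of $\mathsf{Base}$ together with the at most $k(d+1)(b-1)$ tokens spread across the digit counters—is at most $bc-k(d+1)\le n$.

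Each instruction of $\M$ is then implemented by a fixed gadget of ordinary counter instructions acting on the digits. An $\incr(x_{j})$ adds one to the lowest digit of $x_{j}$ and propagates a carry whenever a digit reaches $b$, turning $b$ tokens of one digit into a single token of the next; a digit is compared with $b$ by decrementing it in lockstep with $\mathsf{Base}$ and restoring $\mathsf{Base}$ afterwards. A $\decr(x_{j})$ borrows symmetrically, refilling lower digits to $b-1$, while $\zero(x_{j})$ (resp.\ $\nonzero(x_{j})$) checks that all (resp.\ some) of the $d+1$ digit counters of $x_{j}$ are zero, and $\nop$ does nothing. Because a completed carry restores the invariant that every digit is below $b$, after each finished gadget the token total equals $b$ plus the base-$b$ digit sum of the current value vector of $\M$, which the estimate above keeps at most $n$; faithful runs of $\M'$ therefore mirror runs of $\M$, and I route the accepting and rejecting states accordingly.

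The delicate part is the initialisation: starting from the raw input, which occupies exactly $n$ tokens, I must set up $\mathsf{Base}=b$ and the base-$b$ digits of $(\vec\alpha_{1},\dots,\vec\alpha_{m})$ without the total ever exceeding $n$. Since a counter machine can neither count $n$ nor keep a second copy of the input within $n$ tokens, $\mathsf{Base}$ and the digit encoding cannot be built one after the other; they must instead be produced by relocating the input tokens in lockstep, compressing each counter as $\mathsf{Base}$ is assembled, so that tokens only ever move and the running total stays at $n$. Here nondeterminism is convenient: $\M'$ guesses $b$ as it consumes the input, a branch that guessed $b$ too small overflows the top digit and a branch that guessed $b$ too large cannot be completed within the token budget, and all such branches are diverted to $q_r$. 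As acceptance is existential, the branch guessing a correct $b$ accepts exactly when $\M$ does in case $\varphi=1$, while for $\varphi=0$ no branch reaches $q_a$ and every branch can reach $q_r$.

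The main obstacle is precisely the $n$-bound, which by definition must hold at \emph{every} reachable configuration and on \emph{every} nondeterministic branch—not only for the clean encodings between simulated steps, but also midway through a carry or borrow and throughout the initial encoding pass, where the digit sum transiently swells. Making this invariant hold forces the exact value of $c$, an eager-carry discipline that normalises digits as early as possible, and a guessing scheme in which no branch can ever step above $n$ tokens, so that a budget overflow becomes a clean diversion to $q_r$ rather than a violation of $n$-boundedness. Verifying this invariant gadget by gadget, while checking that every unfaithful branch is diverted before it can accept or get stuck (which would otherwise break the ``reach $q_r$ from everywhere'' requirement for rejection), is where the real work of the proof lies.
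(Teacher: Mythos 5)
Your route is genuinely different from the paper's. The paper factors the lemma through an intermediate notion, \emph{weak} $n$-boundedness (each counter individually bounded by $n$, with no constraint on the sum): Proposition~\ref{prop:poly:weaklin} converts the $n^c$-bounded machine into a weakly $n$-bounded one via base-$(n+1)$ digit counters together with a reference counter $\zn$ initialized to $n$, and Proposition~\ref{prop:weaklin:lin} then converts weak to full $n$-boundedness using counters $y_S$ indexed by \emph{subsets} $S \subseteq X$, where one token in $y_S$ counts simultaneously toward every counter in $S$ and the invariant $\sum_{S} y_S = n$ holds exactly. Your single-stage construction instead partitions the budget, taking base $b = \lfloor n/c \rfloor$ with a materialized reference counter $\mathsf{Base}$; your bookkeeping $b + k(d+1)(b-1) = bc - k(d+1) \le n$ is fine, and the carry/borrow/comparison gadgets are essentially the paper's stage-one gadgets with $\zn$ replaced by $\mathsf{Base}$. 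The decisive advantage of the paper's factoring shows up exactly at the point you call ``the real work'': in stage one the gadget of Figure~\ref{fig:init} computing $\zn = n$ is allowed to let the \emph{total} grow to roughly $2n$ while copying the input through $\zt$, because only weak boundedness is claimed there; and in stage two the input counters are simply $y_{\{x\}}$, so the initial configuration already satisfies the invariant with \emph{no} preprocessing whatsoever. The whole initialization problem you wrestle with is thereby engineered away rather than solved.

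In your scheme that problem is a genuine gap, not just deferred labor: the lockstep-relocation claim is asserted without a mechanism, and the obvious mechanisms fail. At time zero the total is exactly $n$ with zero slack, so you cannot both transfer $b$ tokens from an input counter into $\mathsf{Base}$ and record their value in the digit encoding: recording the value $b$ costs one token (setting digit~$1$ to $1$), but all $b$ transferred tokens are consumed by $\mathsf{Base}$, slack is created only by carries, and carries require $\mathsf{Base}$ to be complete --- a chicken-and-egg with a persistent off-by-one that any naive interleaving inherits. It is patchable (record the pending $+b$ as a debt in the finite control and pay it only after the input has been compressed, when $b + \sum_j \mathrm{digitsum}(\alpha_j) \le bc - k(d+1) \le n - k(d+1)$ guarantees at least one spare token), but nothing in your proposal supplies this. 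A second, smaller soft spot: counter-machine semantics enforces no budget, so a branch that guesses $b$ too large does not ``fail to complete within the token budget'' --- unless every increment in your gadgets is explicitly paired with a decrement or covered by a certified-slack argument, such a branch simply steps above $n$ and destroys $n$-boundedness, which by definition must hold at \emph{every} reachable configuration on \emph{every} branch; and any branch that instead deadlocks must be given escape edges to $q_r$ (the paper's $\nop$ back-edges in Figure~\ref{fig:simul:zero} serve exactly this purpose), since a stuck configuration would violate the ``$q_r$ reachable from everywhere'' requirement in the definition of rejection. You flag both issues yourself, which is to your credit, but the initialization step as written does not go through, and the paper's two-stage decomposition is precisely the idea that removes it.
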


\newcommand{\Mo}{\overline{\M}}
\begin{proof}
  We sketch the main idea of the proof; details can be found in the
  Appendix. Let $c \in \N_{> 0}$ and let $\M$ be an $n^c$-bounded
  counter machine with $k$ counters. To simulate $\M$ by an
  $n$-bounded counter machine $\Mo$, we need some way to represent any
  value $\ell \in [0, n^c]$ by means of counters with values in $[0,
  n]$. We encode such a value $\ell$ by its base $n + 1$
  representation over $c$ counters. Zero-tests are performed by
  zero-testing all $c$ counters sequentially. Nonzero-tests are
  implemented similarly with parallel tests. Incrementation and
  decrementation are implemented with gadgets to (a) assign $0$ to a
  counter; (b) assign $n$ to a counter; (c) test whether a counter
  value equals $n$.

  This construction is only \emph{weakly} $n$-bounded, in the sense
  that all counters are indeed bounded by $n$, but the overall sum can
  reach $k \cdot n$. To circumvent this issue, we simulate $\Mo$ by
  another counter machine $\M'$ whose counters symbolically hold
  values from multiple counters of $\Mo$. In more details, the
  counters are defined as $\{y_S : S \subseteq
  \overline{X}\}$. Intuitively, if counter $y_S$ has value $a$, then
  it contributes by $a$ to the value of each counter of $S$. For
  example, if $\overline{X} = \{x_1, x_2, x_3\}$ and the input size is
  $n = 6$, then counter values $(x_1, x_2, x_3) = (6, 1, 4)$ of $\Mo$
  can be represented in $\M'$ as $y_{\{x_1, x_2, x_3\}} = 1$,
  $y_{\{x_1, x_3\}} = 3$, $y_{\{x_1\}} = 2$, and $y_S = 0$ for every
  other $S$. Under such a representation, the sum of all counters
  equals $n$. Moreover, all instructions can be implemented quite
  easily.
\end{proof}

\subsubsection{Simulation of $n$-bounded counter machines by broadcast consensus protocols}\label{subsec:simu3}

Let $\M = (Q, X, \Delta, m, q_0, q_a, q_r)$ be an $n$-bounded counter
machine that computes some predicate $\varphi \colon \N^m \to \{0,
1\}$. We construct a broadcast consensus protocol $\PP= (Q', R, B,
\Sigma, L, I, O)$ that silently semi-computes $\varphi$.

\newcommand{\idle}{\textit{idle}}
\newcommand{\error}{\textit{err}}
\newcommand{\res}{\textit{rst}}
\newcommand{\one}{\textit{one}}
\newcommand{\Error}{\textit{Move-to-error}}
\renewcommand{\Reset}{\textit{Back-to-origin}}
\newcommand{\One}{\textit{Move-to-one}}
\newcommand{\send}{\rightharpoonup}
\newcommand{\rec}{\rightharpoondown}

\parag{States and mappings} Let $X' \defeq X \cup \{\idle,
\error\}$. The states of $\PP$ are defined as
$$Q' \defeq \underbrace{Q \times \{0, 1\}}_{\text{leader states}}\
\cup\ \underbrace{X' \times X \times \{0,
  1\}}_{\text{nonleader states}}.
$$ The protocol will be designed in such a way that there is always
exactly one agent, called the \emph{leader}, in states $Q \times \{0,
1\}$. Whenever the leader is in state $(q, b)$, we say that its
\emph{position} is $q$, and its \emph{opinion} is $b$. Every other
agent will remain in a state from $X' \times X \times \{0,
1\}$. Whenever a nonleader agent is in state $(x, y, b)$, we say that
its \emph{position} is $x$, its \emph{origin} is $y$, and its
\textit{opinion} is $b$. Intuitively, the leader is in charge of
storing the control state of $\M$, and the nonleaders are in charge of
storing the counter values of $\M$.

The protocol has a single leader whose initial position is the initial
control state of $\M$, i.e.\ $L \defeq \multiset{(q_0, 0)}$. Moreover,
every nonleader agent initially has its origin set to its initial
position, which will remain unchanged by definition of the forthcoming
transition relation: $I(x) \defeq (x, x, 0)~\text{for every}~x \in X$.
The output of each agent is its opinion:
$$
\begin{array}{rl}
     O(q, b) \defeq b &
  \multirow{2}{*}{for every $q \in Q, x \in X', y \in X, b \in \{0, 1\}$.} \\
  O(x, y, b) \defeq b &
\end{array}
$$

We now describe how $\PP$ simulates the instructions of $\M$.

\parag{Decrementation/incrementation} For every transition $q
\trans{\decr(x)} r \in \Delta$, every $y \in X$ and every $b, b' \in
\{0, 1\}$, we add to $R$ the rendez-vous transition:
$$(q, b), (x, y, b') \mapsto (r, b), (\idle, y, b').$$ These
transitions change the position of one agent from $x$ to $\idle$, and
thus decrement the number of agents in position $x$.

Similarly, for every transition $q \trans{\incr(x)} r$, every $y \in
X$ and every $b, b' \in \{0, 1\}$, we add to $R$ the rendez-vous
transition:
$$(q, b), (\idle, y, b') \mapsto (r, b), (x, y, b').$$ These
transitions change the position of an idle agent to $x$, and thus
increment the number of agents in position $x$. If no agent is in
position $\error$, then at least one idle agent is available when
a counter needs to be incremented, since $\M$ is $n$-bounded.

\parag{Nonzero-tests} For every $q \trans{\nonzero(x)} r
\in \Delta$, every $y \in X$ and every $b, b' \in \{0, 1\}$, we add to
$R$ the rendez-vous transition:
$$(q, b), (x, y, b') \mapsto (r, b), (x, y, b').$$ These transitions
can only be executed if there is at least one agent in position
$x$, and thus only if the value of $x$ is nonzero.

\parag{Zero-tests} For a given $x \in X$, let $f^x_\error \colon Q' \to Q'$ be the function
that maps every nonleader in position $x$ to the error position, i.e. $f^x_\error(x, y, b) \defeq (\error, y, b)$ for every $y \in X, b \in \{0, 1\}$, and $f^x$ is the identity for all other states.

For every transition $q \trans{\zero(x)} r \in \Delta$ and every $b
\in \{0, 1\}$, we add to $B$ the broadcast transition $(q, b) \mapsto (r, b); f^x_\error$. If such a transition occurs, then nonleaders in
position $x$ move to $\error$. Thus, an error is detected if{}f the
value of $x$ is nonzero.

To recover from errors, $\PP$ can be reset to its initial
configuration as follows. Let $f_\res \colon Q' \to Q'$ be the
function that sends every state back to its origin, i.e.\
\begin{align*}
  f_\res(q, b) &\defeq(q_0, 0)
  &&\text{for every}~q \in Q, b \in \{0, 1\}, \\
  f_\res(x, y, b) &\defeq (y, y, 0)
  &&\text{for every}~x \in X', y \in X, b \in \{0, 1\}.
\end{align*}
For every $y \in X$ and every $b \in \{0, 1\}$, we add the following
broadcast transition to $B$ to reset $\PP$ to its initial
configuration:
$$
(\error, y, b) \mapsto (y, y, 0); f_\res.
$$

\parag{Acceptance} For every $q \in Q \setminus \{q_a\}$ and $b \in
\{0, 1\}$, we add to $B$ the broadcast transition $(q, b) \mapsto (q_0, 0); f_\res$. Intuitively, as long as the leader's position differs from the
accepting control state $q_a$, it can reset $\PP$ to its initial
configuration. This ensures that $\PP$ can try \emph{all}
computations.

Let $f_\error \colon Q' \to Q'$ be the function that changes the
opinion of each state to $1$, i.e.\
\begin{align*}
  f_\error(q, b) &\defeq(q, 1)
  &&\text{for every}~q \in Q, b \in \{0, 1\}, \\
  f_\error(x, y, b) &\defeq (x, y, 1)
  &&\text{for every}~x \in X', y \in X, b \in \{0, 1\}.
\end{align*}
For every $b \in \{0, 1\}$, we add the following transition to $B$:
$$t_{\one, b} \colon (q_a, b) \to (q_a, 1); f_\one.$$ Intuitively,
these transitions change the opinion of every agent to $1$. If such a
transition occurs in a configuration with no agent in $\error$, then
no agent can change its state anymore, and the stable consensus $1$
has been reached.

\parag{Correctness} Let us fix some some input $\vec{\alpha} \in
\N^m$. Let $C_0$ and $D_0$ be respectively the initial configurations
of $\M$ and $\PP$ on input $\vec{\alpha}$. Abusing notation, for every
$D \in \pop{Q'}$, let
$$D(x) \defeq \sum_{(x, y, b) \in Q'} D(x, y, b).$$

The two following propositions state that every execution of $\M$ has
a corresponding execution in $\PP$ and vice versa. The proofs are routine.

\begin{proposition}\label{proposition:weak-simulation}
  Let $C$ be a configuration of $\M$ such that $C$ is in control state
  $q$ and $C_0 \trans{*} C$. There exists a configuration $D \in
  \pop{Q'}$ such that (i) $D_0 \trans{*} D$;  (ii) $D(x) = C(x)$ for every $x \in X$;
  (iii) $D(\error) = 0$; and (iv) $D(q, b) = 1$ for some $b \in \{0, 1\}$.
\end{proposition}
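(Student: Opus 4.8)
The plan is to proceed by induction on the length of the derivation $C_0 \trans{*} C$ in $\M$, carrying throughout a simulation invariant that directly entails (ii)--(iv). Concretely, I would establish the slightly stronger statement: whenever $C_0 \trans{*} C$ with $C$ in control state $q$, there is a reachable configuration $D$ of $\PP$ such that (a) the unique leader is in state $(q,0)$; (b) $D(x) = C(x)$ for every $x \in X$; (c) $D(\idle) = n - \sum_{x \in X} C(x)$; and (d) $D(\error) = 0$, where $n = |\vec{\alpha}|$ is the number of nonleader agents. Conditions (ii), (iii), (iv) are then immediate, taking $b = 0$. Pinning the opinion to $0$ is harmless, since every transition used in the simulation leaves the opinion of both the leader and the nonleaders unchanged, and the opinion-manipulating transitions (acceptance, reset) are simply never fired along the run we build.

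For the base case $C = C_0$, I would take $D = D_0$ and verify the invariant directly: the leader starts in $(q_0,0)$, the input agents occupy exactly the positions $x_1, \ldots, x_m$ with the correct multiplicities, every non-input counter and the $\idle$ position are empty (so $\sum_{x} C_0(x) = n$), and no agent sits in $\error$.

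For the inductive step I would factor the run as $C_0 \trans{*} C' \trans{\ins} C$, apply the induction hypothesis to obtain a configuration $D'$ satisfying the invariant for $C'$, and then exhibit a short run $D' \trans{*} D$ realizing the instruction $\ins$, by case analysis on its type. For $\decr(x)$ and $\nonzero(x)$, which require $C'(x) > 0$, the invariant gives $D'(x) = C'(x) > 0$, so the corresponding rendez-vous transition between the leader and an agent in position $x$ is enabled and yields the desired $D$. For $\incr(x)$ I would fire the rendez-vous transition that converts an idle agent to position $x$; such an agent exists because $\M$ is $n$-bounded, whence $\sum_x C(x) \leq n$ and thus $D'(\idle) = n - \sum_x C'(x) = n - \bigl(\sum_x C(x) - 1\bigr) \geq 1$. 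For $\zero(x)$, which requires $C'(x) = 0$, I would fire the broadcast $(q',0) \mapsto (q,0); f^x_\error$; since $D'(x) = C'(x) = 0$ no agent is sent to $\error$, so (d) is preserved and only the leader's position changes. The $\nop$ case is trivial, the leader merely updating its position.

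The only genuinely load-bearing observations are the two uses of the invariant that keep the simulation faithful: $n$-boundedness of $\M$ guaranteeing a free idle agent for every increment (the one place where the hypothesis on $\M$ is essential), and the exact correspondence $D'(x) = C'(x)$ ensuring that a zero-test broadcast never spuriously populates $\error$. Everything else is bookkeeping on multiset updates, which is why the statement is fairly called routine; I expect the increment case to be the step requiring the most care, precisely because it is where $n$-boundedness is invoked to certify that an $\idle$ agent is always available.
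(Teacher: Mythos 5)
Your proposal is correct and coincides with the argument the paper intends: the paper gives no details here (it only remarks that ``the proofs are routine''), and your induction on the run of $\M$, maintaining the invariant that the leader carries the current control state with opinion $0$, that nonleader positions match the counter values exactly, and that no agent occupies $\error$ --- with $n$-boundedness invoked precisely to guarantee a free $\idle$ agent for each increment --- is exactly that routine forward simulation, correctly executed in every case.
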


\begin{proposition}\label{proposition:weak-simulation-backwards}
  Let $D \in \pop{Q'}$ be such that $D_0 \trans{*} D$. If $D(\error) =
  0$, then there is a configuration $C$ of $\M$ such that (i) $C_0 \trans{*} C$;
  (ii) $C(x) = D(x)$ for every $x \in X$; and (iii) if $D(q, b) = 1$ for some $(q, b) \in Q'$, then $C$ is in control state $q$.
\end{proposition}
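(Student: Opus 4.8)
The plan is to prove the statement by induction on the length $\ell$ of a fixed execution witnessing $D_0 \trans{*} D$, with a case analysis on the last transition fired, maintaining throughout the invariant that the machine configuration $C$ produced for $D$ satisfies $C(x) = D(x)$ for every $x \in X$ and has control state equal to the position of the unique leader of $D$ (which yields (ii) and (iii) at once). The base case $\ell = 0$ is immediate: $D = D_0$, and $C = C_0$ works, since $D_0(x) = C_0(x)$ for every $x \in X$, the single leader is $(q_0, 0)$, and $C_0$ is in control state $q_0$. For the inductive step I write $D_0 \trans{*} D' \trans{} D$ and exploit the key structural fact that the \emph{only} transition introducing agents in position $\error$ is a zero-test broadcast, while the \emph{only} transitions removing them are the two resets; every other transition leaves the $\error$-count unchanged. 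Hence, as soon as $D(\error) = 0$ and the last transition is not a reset, we get $D'(\error) = 0$ for free and may invoke the induction hypothesis on $D'$.

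I would then split on the last transition $t$. If $t$ is a reset (from an $\error$-agent, or from a non-accepting leader), then because origins are invariant under every transition and $f_\res$ returns each nonleader $(x, y, b)$ to $(y, y, 0)$ and the leader to $(q_0, 0)$, the configuration $D$ equals $D_0$ exactly, and $C = C_0$ again works. If $t$ is the acceptance opinion-flip $t_{\one, b}$, it alters only opinions, so $D$ and $D'$ share the same positions, origins, $\error$-count and leader position $q_a$, and the $C$ supplied by the hypothesis on $D'$ carries over unchanged. If $t$ is a rendez-vous transition simulating $\decr(x)$, $\incr(x)$ or $\nonzero(x)$, the hypothesis on $D'$ yields a reachable $C'$ with control state $q$ (the leader position of $D'$) and $C'(x') = D'(x')$ for all $x'$; the guard of the simulated instruction transfers to $C'$ ($D'(x) \geq 1$ for $\decr$ and $\nonzero$, no guard for $\incr$), so I fire the matching instruction in $\M$ to reach $C$ and check that the counter updates and the new control state $r$ track $D$. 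Finally, if $t$ is a zero-test broadcast $(q, b) \mapsto (r, b); f^x_\error$, then $D(\error) = D'(\error) + D'(x)$, so $D(\error) = 0$ forces both $D'(\error) = 0$ and $D'(x) = 0$; the hypothesis then gives $C'$ with control state $q$ and $C'(x) = 0$, which is exactly the precondition enabling $q \trans{\zero(x)} r$ in $\M$, and firing it yields the desired $C$.

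The only delicate point I anticipate is the interaction with intermediate configurations that \emph{do} contain $\error$-agents, about which the induction hypothesis says nothing. This is precisely what the reset case neutralizes: a reset collapses an arbitrarily long, possibly error-laden, suffix of the execution back to $D_0$, so the argument never needs to reason about those configurations, and the induction re-enters the error-free regime cleanly. The whole argument then hinges on two easily-checked invariants, namely that origins are never modified and that there is always exactly one leader, which make the reset identity $D = D_0$ and the bookkeeping in each case entirely routine.
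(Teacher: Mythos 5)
Your proof is correct and is exactly the argument the paper has in mind: the paper omits the proof of this proposition as ``routine'', and your induction on the length of the execution with a case analysis on the last transition---in particular the key observation that either reset broadcast collapses the configuration precisely to $D_0$ (because origins are invariant and there is always exactly one leader), which sidesteps having to reason about error-laden intermediate configurations, together with the fact that a zero-test broadcast yields $D(\error) = D'(\error) + D'(x)$---is precisely that routine simulation argument. Your silent omission of $\nop$ transitions mirrors the paper's own construction, which likewise leaves their (trivial) simulation implicit.
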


We may now prove that $\PP$ silently semi-computes $\varphi$.

\begin{proposition}\label{lemma:semicompute-nspace}
  For every $n$-bounded counter machine $\M$ that computes some
  predicate $\varphi$, there exists a broadcast consensus protocol that silently
  semi-computes $\varphi$.
\end{proposition}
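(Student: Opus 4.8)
The plan is to verify that the protocol $\PP$ constructed just above silently semi-computes $\varphi$, by establishing conditions~\ref{itm:semi:1} and~\ref{itm:semi:2} of Definition~\ref{definition:silently-semicompute} separately. Throughout I fix an input $\vec{\alpha} \in \N^m$, write $n \defeq |\vec{\alpha}|$ for the number of nonleader agents, and recall some structural facts. First, every transition preserves the number of leaders, so every reachable configuration has exactly one leader and the notions of leader position and opinion are well defined along any execution. Second, since $\M$ is $n$-bounded and $D_0$ has exactly $n$ nonleaders, every error-free reachable configuration has enough idle agents to perform the next increment, which is what makes the forward simulation (Proposition~\ref{proposition:weak-simulation}) go through. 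Third, I assume w.l.o.g.\ that $q_a$ is a sink of $\M$ and that $q_0 \notin \{q_a, q_r\}$; any counter machine can be normalized in this way by prepending a fresh initial state. Finally, because $n$ is fixed, the set of configurations of size $n+1$ is finite, so I may invoke the standard consequence of fairness that, along any fair execution over a finite configuration graph, the set $I$ of configurations occurring infinitely often is a bottom strongly connected component: it is closed under $\Step$ and strongly connected.

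Suppose $\varphi(\vec{\alpha}) = 1$, so $\M$ accepts and there is a run $C_0 \trans{*} (q_a, \vec{v})$. By Proposition~\ref{proposition:weak-simulation} there is $D$ with $D_0 \trans{*} D$, leader at $q_a$, and $D(\error) = 0$; firing $t_{\one,1}$ yields $D^{*}$ in which the leader is at $(q_a,1)$, every agent has opinion $1$, and $D^{*}(\error)=0$. This $D^{*}$ is a terminal $1$-consensus: no reset is enabled (the leader is at $q_a$, for which no reset broadcast exists, and no error agent is present), no simulation transition is enabled (as $q_a$ is a sink), and the only remaining enabled transition $t_{\one,1}$ is a self-loop. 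Moreover, a terminal $1$-consensus is reachable from \emph{every} reachable configuration $D'$: if the leader is at $q_a$ with no error then $t_{\one,1}$ reaches one directly; otherwise an error-recovery broadcast (if $D'(\error)>0$) or a reset broadcast (if the leader is not at $q_a$) returns $\PP$ to $D_0$, whence the accepting run reaches $D^{*}$. Now take any fair execution with infinitely-often set $I$ and pick $C \in I$. Since $C$ is reachable, some terminal $1$-consensus $\hat{D}$ satisfies $C \trans{*} \hat{D}$, and closure of $I$ under $\Step$ forces $\hat{D} \in I$; as $\hat{D}$ is terminal, strong connectivity gives $I = \{\hat{D}\}$. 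Hence the execution eventually reaches and stays at a terminal $1$-consensus, establishing condition~\ref{itm:semi:1}.

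Suppose $\varphi(\vec{\alpha}) = 0$, so $\M$ rejects and in particular does not accept, whence no reachable configuration of $\M$ is in control state $q_a$. By the footnote to Definition~\ref{definition:silently-semicompute} it suffices to show that no terminal configuration is reachable from $D_0$, which I do by proving that every reachable $D$ reaches some configuration distinct from itself. The decisive point is that the leader can never be at $q_a$ without an error agent present: if $D_0 \trans{*} D$ with $D(\error)=0$ and the leader at $q_a$, then Proposition~\ref{proposition:weak-simulation-backwards} would place a reachable $\M$-configuration at $q_a$, contradicting non-acceptance. Consequently: if $D(\error)>0$, an error-recovery broadcast reaches $D_0$, and $D \neq D_0$ since $D_0$ has no error agent; if $D(\error)=0$ and the leader is not at $q_a$, then for $D \neq D_0$ a reset broadcast reaches $D_0 \neq D$, while for $D = D_0$ the rejection hypothesis provides a run $C_0 \trans{*} (q_r, \vec{v})$, which Proposition~\ref{proposition:weak-simulation} simulates into $D_0 \trans{*} D_r$ with leader at $q_r \neq q_0$, hence $D_r \neq D_0$; the remaining case (leader at $q_a$, no error) cannot occur. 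In every case $D$ reaches a different configuration, so $D$ is not terminal, and condition~\ref{itm:semi:2} follows.

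The main obstacle is the characterization of which reachable configurations can be terminal, and in particular ruling out \emph{spurious acceptance}: the leader could in principle drift to $q_a$ along an invalid simulation and produce a bogus terminal $1$-consensus on a rejecting input. This is exactly excluded by Proposition~\ref{proposition:weak-simulation-backwards}, which guarantees that an error-free configuration faithfully mirrors a genuine $\M$-configuration; it is the linchpin of the second case. A secondary, more bookkeeping-heavy difficulty is handling the degenerate configurations in which a reset is merely a self-loop (notably $D_0$ itself): there one must exhibit an independent progress move, supplied by the rejection condition (reachability of $q_r$) together with the normalization $q_0 \notin \{q_a,q_r\}$. Finally, upgrading mere reachability of the target $1$-consensus to genuine \emph{convergence} in the first case rests entirely on the finite-state BSCC consequence of fairness rather than on any feature specific to broadcasts.
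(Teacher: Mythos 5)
Your proof is correct and follows essentially the same route as the paper: the same appeal to Propositions~\ref{proposition:weak-simulation} and~\ref{proposition:weak-simulation-backwards}, with resets guaranteeing that the unique terminal $1$-consensus is reached under fairness in the accepting case, and backward simulation excluding a terminal configuration in the rejecting case. If anything, you are more careful than the paper's own argument, which tacitly treats ``a reset broadcast is enabled'' as implying non-terminality even when that broadcast is a self-loop (e.g.\ at $D_0$); your normalization $q_0 \notin \{q_a, q_r\}$ together with the explicit progress move via the rejecting run to $q_r$, and your explicit BSCC formulation of fairness, close exactly those small gaps.
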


\begin{proof}
  We show that $\PP$ silently semi-computes $\varphi$ by proving the
  two properties of
  Definition~\ref{definition:silently-semicompute}. Let $\vec{\alpha}$
  be an input.
  \begin{enumerate}
    \item Assume $\varphi(\vec{\alpha}) = 1$. Then $\M$ accepts
      $\vec{\alpha}$, and so there is a configuration $C$ such that
      $C_0 \trans{*} C$ and $C$ is in control state $q_a$. By
      Proposition~\ref{proposition:weak-simulation}, there exists some
      configuration $D \in \pop{Q'}$ satisfying $D_0 \trans{*} D$,
      $D(\error) = 0$ and $D(q_a, b) = 1$. Since $\M$ halts when
      reaching $q_a$, the only transition enabled at $D$ is $t_{\one,
        b}$, and its application yields a terminal configuration $D'$
      of consensus $1$. Further, every configuration reachable from
      $D_0$, where the leader is not in position $q_a$ or where some
      nonleader is in position $\error$, can be set back to $D_0$ via
      some reset transition. Therefore, every fair execution of $\PP$
      starting at $I(\vec{\alpha})=C_0$ will eventually reach $D'$.

    \item Assume $\varphi(\vec{\alpha}) = 0$. We prove by
      contradiction that no configuration $D$ reachable from $D_0$ is
      terminal. Assume the contrary. We must have $D(q_a, 1) = 1$,
      $D(\error) = 0$ and $O(D) = 1$, for otherwise some broadcast
      transition with $f_\res$ or $f_\one$ would be enabled. From this
      and by Proposition~\ref{proposition:weak-simulation-backwards},
      there exists some configuration $C$ of $\M$ in control state
      $q_a$ and satisfying $C_0 \trans{*} C$. Thus, $\M$ accepts
      $\vec{\alpha}$, contradicting
      $\varphi(\alpha)=0$. \hfill\qedhere
  \end{enumerate}
\end{proof}

\subsubsection{Main theorem}\label{subsec:main}

We prove our main result, namely that broadcast consensus protocols precisely compute the predicates in $\NL$.

\begin{theorem}
Broadcast consensus protocols compute exactly the predicates in $\NL$.
\end{theorem}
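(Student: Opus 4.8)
The plan is to prove the two inclusions separately, assembling the results already established in this section; neither direction requires genuinely new work, since all the heavy lifting has been done in the preceding lemmas and propositions.

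For the direction that broadcast consensus protocols compute only predicates in $\NL$, I would simply invoke Proposition~\ref{prop:gen:upper:bound}. Broadcast consensus protocols are generic consensus protocols whose step relation lies in $\NL$ (as noted in Section~\ref{sec:easy}), so the proposition immediately gives that every predicate they compute is in $\NL$. No further argument is needed here.

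For the converse, I would show that every predicate $\varphi \in \NL$ is computable by a broadcast consensus protocol by chaining the simulation results in order. Starting from the Corollary to Theorem~\ref{thm:TM-CM}, $\varphi$ is computable by a polynomially bounded counter machine. By Lemma~\ref{lem:poly:input}, it is then computable by an $n$-bounded counter machine, and by Proposition~\ref{lemma:semicompute-nspace} it is silently semi-computable by a broadcast consensus protocol. Since $\varphi$ was an arbitrary predicate in $\NL$, this verifies the hypothesis of Proposition~\ref{prop:fromsemitocomp}, namely that \emph{every} predicate in $\NL$ is silently semi-computable. That proposition then upgrades silent semi-computability to genuine (silent) computability, closing this inclusion.

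The only real subtlety, which is already absorbed into Proposition~\ref{prop:fromsemitocomp} and Lemma~\ref{lemma:semi-to-nonsemi}, is the gap between semi-computing and computing: a silently semi-computing protocol halts on $1$-inputs but not on $0$-inputs. The crucial ingredient bridging this gap is the identity $\NL = \coNL$, which guarantees that the complement $\overline{\varphi}$ is also in $\NL$ and hence also silently semi-computable; Lemma~\ref{lemma:semi-to-nonsemi} then combines the two semi-computing protocols into a single protocol that alternates simulations via broadcast resets and is guaranteed by fairness to settle on the correct consensus. Combining both inclusions establishes that broadcast consensus protocols compute exactly the predicates in $\NL$. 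I expect the hardest conceptual point to have been the semi-computation-to-computation step rather than the easy upper bound, but at the level of the final theorem this reduces to citing the relevant results.
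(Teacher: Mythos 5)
Your proof is correct and follows essentially the same route as the paper: Proposition~\ref{prop:gen:upper:bound} for the upper bound, and for the converse the chain Theorem~\ref{thm:TM-CM} / Lemma~\ref{lem:poly:input} / Proposition~\ref{lemma:semicompute-nspace}, with $\NL = \coNL$ feeding into Proposition~\ref{prop:fromsemitocomp} (via Lemma~\ref{lemma:semi-to-nonsemi}) to bridge semi-computability and computability. The only cosmetic difference is that you verify the universally quantified hypothesis of Proposition~\ref{prop:fromsemitocomp} directly, whereas the paper instantiates the $\NL=\coNL$ step explicitly for $\varphi$ and $\overline{\varphi}$ inside the theorem's proof; these are the same argument.
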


\begin{proof}
  Proposition \ref{prop:gen:upper:bound} shows that every predicate
  computable by broadcast consensus protocols is in $\NL$. For the
  other direction, let $\varphi$ be a predicate in $\NL$. Since ${\NL
  = \coNL}$ by Immerman-Stelepcs\'enyi's theorem, the complement
  predicate $\overline{\varphi}$ is also in $\NL$. Thus, $\varphi$ and
  $\overline{\varphi}$ are computable by $\bigO(\log n)$-space-bounded
  nondeterministic Turing machines. By Theorem~\ref{thm:FMR68} and
  Proposition~\ref{lem:poly:input}, $\varphi$ and $\overline{\varphi}$
  are computable by polynomially bounded counter machines, and thus by
  $n$-bounded counter machines. Therefore, by
  Proposition~\ref{lemma:semicompute-nspace}, $\varphi$ and
  $\overline{\varphi}$ are silently semi-computable by broadcast
  consensus protocols. By Proposition~\ref{prop:fromsemitocomp}, this
  implies that $\varphi$ is silently computable by a broadcast
  consensus protocol.
\end{proof}

Actually, the proof shows this slightly stronger result:

\begin{corollary}
\label{cor:silently}
A predicate is computable by a broadcast consensus protocol if{}f it
is silently computable by a broadcast consensus protocol. In
particular, broadcast consensus protocols silently compute all
predicates in $\NL$.
\end{corollary}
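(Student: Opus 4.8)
The plan is to prove both directions of the equivalence, drawing the non-trivial content entirely from results already established.

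First I would dispatch the easy direction, ``silently computable $\Rightarrow$ computable,'' which is immediate from the definitions. If $\PP$ silently computes $\varphi$, then on every input $\vec{\alpha}$ every fair execution reaches a terminal $\varphi(\vec{\alpha})$-consensus configuration $C$. Since $C$ is terminal, the only configuration reachable from $C$ is $C$ itself, so every subsequent configuration along the execution equals $C$ and is therefore a $\varphi(\vec{\alpha})$-consensus. Hence the execution stabilizes to $\varphi(\vec{\alpha})$, i.e.\ $\PP$ computes $\varphi$.

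For the converse, ``computable $\Rightarrow$ silently computable,'' I would chain two facts already at hand. Suppose $\varphi$ is computable by a broadcast consensus protocol; by Proposition~\ref{prop:gen:upper:bound} we obtain $\varphi \in \NL$. The crucial observation is that the proof of the main theorem establishes more than bare computability: its final step invokes Proposition~\ref{prop:fromsemitocomp}, which yields \emph{silent} computability. Concretely, since $\varphi \in \NL$ and $\NL = \coNL$, both $\varphi$ and $\overline{\varphi}$ lie in $\NL$, hence by Theorem~\ref{thm:FMR68} together with Propositions~\ref{lem:poly:input} and~\ref{lemma:semicompute-nspace} they are silently semi-computable by broadcast consensus protocols, and Proposition~\ref{prop:fromsemitocomp} then produces a broadcast consensus protocol that silently computes $\varphi$.

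The ``in particular'' claim follows at once: the main theorem shows every predicate in $\NL$ is computable by a broadcast consensus protocol, and the equivalence just proved identifies computability with silent computability, so every such predicate is in fact silently computable. There is no genuine obstacle here beyond the single observation that the main theorem's proof already delivers silent computability, which one then couples with the $\NL$ upper bound of Proposition~\ref{prop:gen:upper:bound} to close the equivalence.
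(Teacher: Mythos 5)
Your proposal is correct and follows essentially the same route as the paper: the paper justifies this corollary by noting that the main theorem's proof already concludes with Proposition~\ref{prop:fromsemitocomp}, i.e.\ with \emph{silent} computability, which combined with the $\NL$ upper bound of Proposition~\ref{prop:gen:upper:bound} (and the trivial observation that silent computability implies computability, since a terminal $b$-consensus can only reach itself) closes the equivalence exactly as you argue.
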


\section{Subclasses of broadcast consensus protocols}
While broadcasting is a natural, well understood, and much used
communication mechanism, it also consumes far more energy than
rendez-vous communication. In particular, agents able to broadcast are
more expensive to implement. In this section, we briefly analyze which
restrictions can be imposed on the broadcast model without reducing
its computational power. We show that all predicates in $\NL$ can be
computed by protocols satisfying two properties:
\begin{enumerate}
\item only one agent broadcasts; all other agents only use rendez-vous
  communication.

\item the broadcasting agent only needs to send one signal, meaning
  that the receivers' response is independent of the broadcast signal.
\end{enumerate}

Finally, we show that a third restriction \emph{does} decrease the
computational power. In simulations of the previous section,
broadcasts are often used to ``reset'' the system. Since computational
models with resets have been devoted quite some attention~\cite{LeeY94,Schnoebelen10,DufourdFS98,CHH18}, we investigate the computational power of
protocols with resets.\smallskip

\parag{Protocols with only one broadcasting agent}
Loosely speaking, a broadcast protocol with one broadcasting agent is
a broadcast protocol $\PP = (Q, R, B, \Sigma, L, I, O)$ with a set
$Q_\ell$ of leader states such that $L = \multiset{q}$ for some $q \in
Q_\ell$ (i.e., there is exactly one leader), and whose transitions
ensure that the leader always remains within $Q_\ell$, that no other
agent enters $Q_\ell$, and that only agents in $Q_\ell$ can trigger
broadcast transitions. Protocols with multiple broadcasting agents can
be simulated by protocols with one broadcasting agent, say $b$. Instead
of directly broadcasting, an agent communicates with $b$ by
rendez-vous, and delegates to $b$ the task of executing the
broadcast. More precisely, a broadcast transition $q \mapsto q'; f$ is
simulated by a rendez-vous transition ${(q, q_\ell) \mapsto (q_{aux},
q_{\ell,f})}$, followed by a broadcast transition $q_{\ell,f} \mapsto
q_\ell; (f \cup \{ q_{aux} \mapsto q' \})$.

\parag{Single-signal broadcast protocols}
In single-signal protocols the receivers' response is independent of
the broadcast signal. Formally, a broadcast protocol $(Q, R, B,
\Sigma, I, O)$ is a \emph{single-signal protocol} if there exists a
function $f \colon Q \to Q$ such that $B \subseteq Q^2 \times \{f\}$.

\begin{restatable}{proposition}{propSingleSignal}\label{prop_single_signal}
 Predicates computable by broadcast consensus protocols are also
 computable by single-signal broadcast protocols.
\end{restatable}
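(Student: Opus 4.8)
The plan is to give a generic simulation that turns an arbitrary broadcast consensus protocol $\PP$ into a single-signal protocol $\PP'$ computing the same predicate. First I would invoke the construction sketched earlier for protocols with a single broadcasting agent, so that we may assume without loss of generality that $\PP$ has one distinguished leader triggering every broadcast. Writing $g_1, \dots, g_m$ for the finitely many transfer functions occurring in the broadcast transitions of $\PP$, the core difficulty is that all of them must be realised through one fixed transfer function $f$, even though $f$ acts on every receiver identically and independently of which broadcast the leader intends.

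The key idea is to equip each non-leader with an \emph{instruction} field ranging over $\{\mathrm{id}, g_1, \dots, g_m\}$, initially $\mathrm{id}$, so that the single universal $f$ behaves as an interpreter: a receiver in logical state $s$ with instruction $g_j$ is mapped to logical state $g_j(s)$ and its instruction is cleared to $\mathrm{id}$. To simulate a broadcast $q \mapsto r; g_j$ of $\PP$, the leader first uses rendez-vous transitions to set the instruction field of the non-leaders to $g_j$, one agent at a time, and then fires the single-signal broadcast, whose sender transition simultaneously updates the leader's own state from $q$ to $r$. Provided every non-leader has been instructed before the signal fires, the net effect on the population is exactly that of the original transfer $g_j$, and all instruction fields return to $\mathrm{id}$, yielding precisely the corresponding configuration of $\PP$ (augmented by the trivial instruction component).

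The main obstacle is that the leader cannot know when it has instructed \emph{all} non-leaders, since it ignores the population size, and so the single signal may fire prematurely while some receivers still carry $\mathrm{id}$. I would resolve this exactly in the spirit of the power-of-two example and of the construction in Section~\ref{subsec:simu3}: I let $f$ send every receiver whose instruction is still $\mathrm{id}$ to a distinguished error state $\bot$, so that a premature firing is immediately witnessed by a $\bot$-agent; the presence of a $\bot$-agent then enables rendez-vous transitions that return every agent to its recorded origin, restoring the initial configuration of $\PP'$ without any further broadcast. By fairness, among all executions there is always one in which each simulated broadcast is preceded by a complete instruction phase, so $\PP'$ faithfully reproduces every reachable configuration of $\PP$, while all spurious firings are eventually reset away and retried. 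Consequently $\PP'$ stabilises to the same consensus as $\PP$ on every input and hence computes $\varphi$. Note that this reduction genuinely uses a leader, which is consistent with the earlier remark that Proposition~\ref{prop_single_signal} is the one result not claimed for leaderless protocols.
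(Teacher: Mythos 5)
Your overall skeleton---one broadcasting leader, a single universal transfer function acting as an interpreter, one-by-one rendez-vous to prepare each receiver, detection of premature firings via an error state, reset to recorded origins, and fairness to ensure an eventually correct simulation---matches the paper's construction, up to a harmless push/pull inversion (you distribute the instruction \emph{before} the signal and let the signal commit it; the paper freezes everyone \emph{at} the signal and distributes the transfer afterwards by rendez-vous). However, there are two genuine gaps. First, you simulate an arbitrary protocol $\PP$ and conclude stabilization from ``spurious firings are eventually reset away and retried''. That argument only yields semi-computation-style correctness (from every reachable configuration \emph{some} continuation simulates faithfully); it does not show that \emph{every} fair execution stabilizes, which is what computing a predicate requires. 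If fair executions of $\PP$ must fire broadcasts forever---which merely stabilizing, non-silent protocols may do---then some mid-instruction configuration of $\PP'$ recurs infinitely often; since the premature firing is enabled there, fairness forces it to occur infinitely often, each occurrence produces $\bot$-agents and a reset to the initial configuration, and the output of $\PP'$ never stabilizes. The paper avoids this by first invoking Corollary~\ref{cor:silently} to assume w.l.o.g.\ that $\PP$ \emph{silently} computes the predicate, so that in a correct simulation broadcast signals cease and the error/reset machinery becomes quiescent; the paper explicitly remarks that silentness is crucial to the construction, and your proof never invokes it.

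Second, your reset ``without any further broadcast'' is unsound as described. While a $\bot$-agent returns agents to their origins one rendez-vous at a time, the already-reset agents immediately resume the simulation and can interact with stale, not-yet-reset agents: a stale agent with instruction $\mathrm{id}$ in logical state $s$ is indistinguishable from a legitimate agent in state $s$, so the leader will instruct it and incorporate its stale state into the new run. Hence the reset does not ``restore the initial configuration of $\PP'$''; it produces hybrid configurations whose logical projections need not be reachable in $\PP$ on the given input, and nothing in your construction prevents such a hybrid run from completing an apparently error-free simulation and locking in a wrong terminal consensus. This is precisely the legitimate/illegitimate-agent problem the paper's proof confronts: there, the \emph{initiation of a reset is itself a broadcast} of the unique freeze signal, which disables all other agents, guarantees exactly one active resetter, and lets abandoned frozen agents escape to an error state to trigger a fresh reset. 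To repair your construction, your reset initiation should likewise broadcast the single signal (your interpreter $f$ can accommodate this) rather than rely on rendez-vous alone.
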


\begin{proof}
We give a proof sketch; details can be found in the Appendix.  We
simulate a broadcast protocol $\PP$ by a single-signal protocol
$\PP'$. The main point is to simulate a broadcast step
$C_1 \trans{q_1 \mapsto q_2; g} C_2$ of $\PP$ by a sequence of steps
of $\PP'$.

In $\PP$, an agent at state $q_1$, say $a$, moves to $q_2$, and broadcasts the signal
with meaning ``react according to $g$''. Intuitively, in $\PP'$,  agent $a$ broadcasts the
unique signal of $\PP'$, which has the meaning ``freeze''. An agent that receives the signal, say $b$,  becomes ``frozen''. Frozen agents can only be ``awoken'' by a rendez-vous with $a$. When the rendez-vous happens, $a$ tells $b$ which state it has to move to according to $g$.

The problem with this procedure is that $a$ has no way to know if it has already performed
a rendez-vous with all frozen agents. Thus, frozen agents can spontaneously
move to a state $\error$ indicating ``I am tired of waiting''. If an agent is in this state, then eventually all agents go back to their initial states, reinitializing the computation. This is achieved by   letting agents in state $\error$ move to their initial states while broadcasting the ``freeze'' signal.
\end{proof}

\parag{Protocols with reset}
In protocols with reset, all broadcasts transitions reset the protocol
to its initial configuration. Formally, a \emph{population protocol
  with reset} is a broadcast protocol $\PP = (Q, R, B, \Sigma, I, O)$
such that for every finite
execution $C_0 C_1 \cdots C_k$ from an initial configuration $C_0$, the following holds: $C_k
\trans{b} C'$ implies $C' = C_0$ for every $b \in B$ and every $C' \in
\pop{Q}$. 

\begin{restatable}{proposition}{propReset}\label{prop:reset}
  Every predicate computable by a population protocol with reset is
  Presburger-definable, and thus computable by a standard population
  protocol.
\end{restatable}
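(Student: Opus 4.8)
The plan is to reduce the statement to the Angluin \emph{et al.}\ theorem~\cite{AAER07}: I will show that the predicate $\varphi$ computed by a population protocol with reset $\PP = (Q, R, B, \Sigma, I, O)$ is already computable by a \emph{standard} population protocol, which by that theorem makes it Presburger-definable. Fix an input $X$, write $C_0$ for the corresponding initial configuration, and let $\PP_{\mathrm{rv}}$ denote the ordinary population protocol obtained from $\PP$ by deleting every broadcast transition. The first observation is that resets do not enlarge the reachable set: since every broadcast, fired from any reachable configuration, yields exactly $C_0$, a straightforward induction on path length shows that the configurations reachable from $C_0$ under $\Step$ are precisely those reachable from $C_0$ using rendez-vous transitions alone. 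Hence the reachability graph of $\PP$ from $C_0$ is the rendez-vous graph of $\PP_{\mathrm{rv}}$ with a back-edge to $C_0$ added from every configuration that populates some \emph{broadcasting state}, i.e.\ a state occurring as the source of a transition in $B$.

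Next I would analyse the limit behaviour of fair executions through the bottom strongly connected components (SCCs) of this graph. A bottom SCC either contains $C_0$, or populates no broadcasting state and is therefore an ordinary bottom SCC of $\PP_{\mathrm{rv}}$; I call the latter \emph{stable}. Using fairness one obtains a clean dichotomy. If some stable SCC is reachable from $C_0$, then no fair execution can return to $C_0$ infinitely often (once a stable SCC is entered it cannot be left, yet recurrence of $C_0$ would, by fairness, force entry into it), so every fair execution ends in a stable SCC; consequently $\varphi(X)$ equals the common consensus value of all reachable stable SCCs, which is well-defined because $\PP$ computes a predicate. If no stable SCC is reachable, then every rendez-vous bottom SCC populates a broadcasting state and thus resets back to $C_0$, so the whole reachable set forms a single bottom SCC containing $C_0$, and $\varphi(X)$ is the common consensus value of \emph{all} configurations reachable from $C_0$. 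The point this makes precise is that resets \emph{prune} exactly the rendez-vous bottom SCCs that still enable a broadcast, retaining only the broadcast-free ones.

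Finally I would realise this pruning inside a standard population protocol $\PP'$, thereby removing broadcasts altogether. The key asset is that a reset targets the \emph{specific} initial configuration $C_0$, which is reconstructible from the agents' (immutable) inputs; I will therefore replace each broadcast of $\PP$ by a purely rendez-vous mechanism that performs the global reset gradually: the agent that would broadcast becomes a roaming ``reset leader'' which, meeting the other agents one by one, sends each of them back to its initial state and freezes it, and upon having reset everyone restores $C_0$ and resumes the simulation of $\PP_{\mathrm{rv}}$. Standard leader-cancellation handles several simultaneous candidates, and fairness guarantees that every reset in progress eventually completes, so all transient configurations can reach $C_0$ and none is terminal. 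By construction the bottom SCCs of $\PP'$ reachable from $C_0$ coincide with the stable SCCs of $\PP$ (or, when none exists, with the single component containing $C_0$), so $\PP'$ is a well-formed standard population protocol computing $\varphi$, and~\cite{AAER07} then yields that $\varphi$ is Presburger-definable. I expect the main obstacle to be precisely this last simulation: one must design the roaming-reset gadget so that partially-reset, ``frozen'' configurations neither deadlock nor settle into a spurious consensus of the wrong value, and so that the reachable bottom SCCs match \emph{exactly} the broadcast-free stable SCCs identified in the dichotomy. This is delicate bookkeeping, but it is made possible by the fact that the reset destination $C_0$ is fixed by the inputs rather than by the current configuration.
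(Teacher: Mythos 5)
Your first two steps are sound and essentially reconstruct the paper's semantic characterization: the observation that reset edges only point back to $C_0$, and the dichotomy over bottom SCCs, correspond to the paper's claim that the accepting initial configurations are exactly $\pop{I} \cap (S_1 \cup \pre^*(F_1))$ with $F_1 = S_1 \cap \bottom \cap \noreset$. The divergence, and the gap, is in your third step. The paper never builds a simulating rendez-vous protocol; it proves directly that $I_1$ is a Presburger \emph{set}, using results from~\cite{esparza2017verification} that $S_0$, $S_1$, mutual reachability and $\bottom$ are Presburger-definable, plus a crucial partition trick: $\pre^*$ of a Presburger set is \emph{not} Presburger-definable in general (reachability sets of this kind of system need not be semilinear), but because $\PP$ computes a total predicate, $I' \cap \pre^*(F_0)$ and $I' \cap \pre^*(F_1)$ partition $I'$, and both parts of such a partition are Presburger-definable. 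Your route avoids this machinery only if the roaming-reset simulation works — and that is precisely where it breaks.

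The flaw in the gadget is completion detection, which you flag as ``delicate bookkeeping'' but which is in fact the central obstruction the whole paper is about: agents are anonymous and the population size is unknown, so the reset leader has no way to know it has met every agent, and any nondeterministic ``finish'' move produces a partial reset — some agents at their initial states, others still in mid-simulation states — yielding a configuration that is not $C_0$, is generally not reachable in $\PP$ at all, and carries \emph{no marker} of its corruption once frozen agents are unfrozen. Such a corrupted configuration can drift into a broadcast-free bottom SCC of the rendez-vous graph with the wrong consensus; that SCC is quiescent in your $\PP'$ (no broadcasting state populated, no error token left), so it is a genuine reachable bottom SCC of $\PP'$, fairness can trap executions there, and your claimed correspondence between bottom SCCs of $\PP'$ and stable SCCs of $\PP$ fails. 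Note that the paper's own single-signal construction (Proposition~\ref{prop_single_signal}) faces exactly this freeze-and-rendez-vous problem and survives only because an atomic broadcast ``freeze'' remains available to relaunch a clean global reset, and because the simulated protocol is silent — neither resource exists in a pure population protocol. There is also a near-circularity: the existence of a standard protocol equivalent to $\PP$ is essentially the statement being proven, and your argument reduces it to a construction whose correctness is the entire difficulty; absent a working gadget, you need the semilinearity argument the paper gives.
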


\newcommand{\bottom}{\mathcal{B}}
\newcommand{\noreset}{\mathcal{N}}

\begin{proof}
  We give a proof sketch; details can be found in the Appendix. Let
  $\PP = (Q, R, B, \Sigma, I, O)$ be a population protocol with reset
  that computes some predicate. We show that the set of accepting
  initial configurations of $\PP$, denoted $I_1$, is
  Presburger-definable as follows. Let: 
  \begin{itemize}
  \item $\PP'$ be the population protocol obtained from $\PP$ by
    eliminating the resets;
 
  \item $\noreset$ be the set of configurations $C$ of $\PP'$ from
    which no reset can occur, i.e., no configuration reachable from
    $C$ enables a reset of $\PP$;
  
  \item $S_1$ be the set of configurations $C$ of $\PP'$ that are
    stable 1-consensuses, i.e., $O(C')=1$ for every $C'$ reachable
    from $C$;

  \item $\bottom$ be the set of configurations $C$ of $\PP'$ that
    belong to a bottom strongly connected component of the
    configuration graph, i.e., $C$ can reach $C'$ iff $C'$ can reach
    $C$.  \end{itemize}
  
  We show that an initial configuration $C$ belongs to $I_1$ if{}f it
  belongs to $S_1$ or it can reach a configuration from $S_1 \cap
  \bottom \cap \noreset$. Using results
  from~\cite{esparza2017verification}, showing in particular that
  $\bottom$ is Presburger-definable, we show that $I_1$ is
  Presburger-definable.
\end{proof}

\section{Conclusion}\label{sec:conclusion}
We have studied the expressive power of broadcast consensus protocols:
an extension of population protocols with reliable broadcasts, a
standard communication primitive in concurrency and distributed
computing. We have shown that, despite their simplicity, they
precisely compute predicates from the complexity class \NL{}, and are
thus as expressive as several other proposals from the literature
which require a primitive more difficult to
implement: \emph{receiving} messages from all agents, instead of
sending messages to all agents.

As future work, we wish to study properties beyond expressiveness,
such as state complexity and space vs.\ speed trade-offs. It would
also be interesting to tackle the formal verification of broadcast
consensus protocols. Although this is challenging as it goes beyond
Presburger arithmetic and the decidability frontier, it has recently
been shown that models with broadcasts admit more tractable
approximations~\cite{BHM18}.

\bibliography{references}

\begin{thebibliography}{10}

\bibitem{AbolhasanWD04}
Mehran Abolhasan, Tadeusz~A. Wysocki, and Eryk Dutkiewicz.
\newblock A review of routing protocols for mobile ad hoc networks.
\newblock {\em Ad Hoc Networks}, 2(1):1--22, 2004.
\newblock \href {http://dx.doi.org/10.1016/S1570-8705(03)00043-X}
  {\path{doi:10.1016/S1570-8705(03)00043-X}}.

\bibitem{AlistarhAEGR17}
Dan Alistarh, James Aspnes, David Eisenstat, Rati Gelashvili, and Ronald~L.
  Rivest.
\newblock Time-space trade-offs in population protocols.
\newblock In {\em Proc.\ Twenty-Eighth Annual {ACM-SIAM} Symposium on Discrete
  Algorithms ({SODA})}, pages 2560--2579, 2017.
\newblock \href {http://dx.doi.org/10.1137/1.9781611974782.169}
  {\path{doi:10.1137/1.9781611974782.169}}.

\bibitem{AlistarhAG18}
Dan Alistarh, James Aspnes, and Rati Gelashvili.
\newblock Space-optimal majority in population protocols.
\newblock In {\em Proc.\ Twenty-Ninth Annual {ACM-SIAM} Symposium on Discrete
  Algorithms ({SODA})}, pages 2221--2239, 2018.
\newblock \href {http://dx.doi.org/10.1137/1.9781611975031.144}
  {\path{doi:10.1137/1.9781611975031.144}}.

\bibitem{AlistarhG18}
Dan Alistarh and Rati Gelashvili.
\newblock Recent algorithmic advances in population protocols.
\newblock {\em {SIGACT} News}, 49(3):63--73, 2018.
\newblock \href {http://dx.doi.org/10.1145/3289137.3289150}
  {\path{doi:10.1145/3289137.3289150}}.

\bibitem{AADFP04}
Dana Angluin, James Aspnes, Zo{\"{e}} Diamadi, Michael~J. Fischer, and
  Ren{\'{e}} Peralta.
\newblock Computation in networks of passively mobile finite-state sensors.
\newblock In {\em Proc.\ $23^\text{rd}$ Annual {ACM} Symposium on Principles of
  Distributed Computing ({PODC})}, pages 290--299, 2004.
\newblock \href {http://dx.doi.org/10.1145/1011767.1011810}
  {\path{doi:10.1145/1011767.1011810}}.

\bibitem{AADFP06}
Dana Angluin, James Aspnes, Zo{\"{e}} Diamadi, Michael~J. Fischer, and
  Ren{\'{e}} Peralta.
\newblock Computation in networks of passively mobile finite-state sensors.
\newblock {\em Distributed Computing}, 18(4):235--253, 2006.

\bibitem{AAER07}
Dana Angluin, James Aspnes, David Eisenstat, and Eric Ruppert.
\newblock The computational power of population protocols.
\newblock {\em Distributed Computing}, 20(4):279--304, 2007.
\newblock \href {http://dx.doi.org/10.1007/s00446-007-0040-2}
  {\path{doi:10.1007/s00446-007-0040-2}}.

\bibitem{Aspnes17}
James Aspnes.
\newblock Clocked population protocols.
\newblock In {\em Proc.\ {ACM} Symposium on Principles of Distributed Computing
  ({PODC})}, pages 431--440, 2017.

\bibitem{BDGG17}
Nathalie Bertrand, Miheer Dewaskar, Blaise Genest, and Hugo Gimbert.
\newblock Controlling a population.
\newblock In {\em Proc.\ $28^\text{th}$ International Conference on Concurrency
  Theory ({CONCUR})}, volume~85, pages 12:1--12:16, 2017.
\newblock \href {http://dx.doi.org/10.4230/LIPIcs.CONCUR.2017.12}
  {\path{doi:10.4230/LIPIcs.CONCUR.2017.12}}.

\bibitem{BHM18}
Michael Blondin, Christoph Haase, and Filip Mazowiecki.
\newblock Affine extensions of integer vector addition systems with states.
\newblock In {\em Proc.\ $29^\text{th}$ International Conference on Concurrency
  Theory ({CONCUR})}, pages 14:1--14:17, 2018.
\newblock \href {http://dx.doi.org/10.4230/LIPIcs.CONCUR.2018.14}
  {\path{doi:10.4230/LIPIcs.CONCUR.2018.14}}.

\bibitem{CHH18}
Dmitry Chistikov, Christoph Haase, and Simon Halfon.
\newblock Context-free commutative grammars with integer counters and resets.
\newblock {\em Theoretical Computer Science}, 735:147--161, 2018.
\newblock \href {http://dx.doi.org/10.1016/j.tcs.2016.06.017}
  {\path{doi:10.1016/j.tcs.2016.06.017}}.

\bibitem{DelzannoRB02}
Giorgio Delzanno, Jean{-}Fran{\c{c}}ois Raskin, and Laurent~Van Begin.
\newblock Towards the automated verification of multithreaded {J}ava programs.
\newblock In {\em Proc.\ $8^\text{th}$ International Conference on Tools and
  Algorithms for the Construction and Analysis of Systems ({TACAS})}, pages
  173--187, 2002.
\newblock \href {http://dx.doi.org/10.1007/3-540-46002-0_13}
  {\path{doi:10.1007/3-540-46002-0_13}}.

\bibitem{DufourdFS98}
Catherine Dufourd, Alain Finkel, and Philippe Schnoebelen.
\newblock Reset nets between decidability and undecidability.
\newblock In {\em Proc.\ $25^\text{th}$ International Colloquium on Automata,
  Languages and Programming (ICALP)}, pages 103--115, 1998.
\newblock \href {http://dx.doi.org/10.1007/BFb0055044}
  {\path{doi:10.1007/BFb0055044}}.

\bibitem{ElsasserR18}
Robert Els{\"{a}}sser and Tomasz Radzik.
\newblock Recent results in population protocols for exact majority and leader
  election.
\newblock {\em Bulletin of the {EATCS}}, 126, 2018.

\bibitem{EmersonN98}
E.~Allen Emerson and Kedar~S. Namjoshi.
\newblock On model checking for non-deterministic infinite-state systems.
\newblock In {\em Proc.\ Thirteenth Annual {IEEE} Symposium on Logic in
  Computer Science ({LICS})}, pages 70--80, 1998.
\newblock \href {http://dx.doi.org/10.1109/LICS.1998.705644}
  {\path{doi:10.1109/LICS.1998.705644}}.

\bibitem{EsparzaFM99}
Javier Esparza, Alain Finkel, and Richard Mayr.
\newblock On the verification of broadcast protocols.
\newblock In {\em Proc.\ $14^\text{th}$ Annual {IEEE} Symposium on Logic in
  Computer Science ({LICS})}, pages 352--359, 1999.
\newblock \href {http://dx.doi.org/10.1109/LICS.1999.782630}
  {\path{doi:10.1109/LICS.1999.782630}}.

\bibitem{esparza2017verification}
Javier Esparza, Pierre Ganty, J{\'{e}}r{\^{o}}me Leroux, and Rupak Majumdar.
\newblock Verification of population protocols.
\newblock {\em Acta Informatica}, 54(2):191--215, 2017.
\newblock \href {http://dx.doi.org/10.1007/s00236-016-0272-3}
  {\path{doi:10.1007/s00236-016-0272-3}}.

\bibitem{FinkelL02}
Alain Finkel and J{\'{e}}r{\^{o}}me Leroux.
\newblock How to compose {P}resburger-accelerations: Applications to broadcast
  protocols.
\newblock In {\em Proc.\ 22nd Conference on Foundations of Software Technology
  and Theoretical Computer Science ({FSTTCS})}, pages 145--156, 2002.
\newblock \href {http://dx.doi.org/10.1007/3-540-36206-1_14}
  {\path{doi:10.1007/3-540-36206-1_14}}.

\bibitem{FMR68}
Patrick~C. Fischer, Albert~R. Meyer, and Arnold~L. Rosenberg.
\newblock Counter machines and counter languages.
\newblock {\em Mathematical Systems Theory}, 2(3):265--283, 1968.

\bibitem{Imm88}
Neil Immerman.
\newblock Nondeterministic space is closed under complementation.
\newblock {\em {SIAM} Journal on Computing}, 17(5):935--938, 1988.
\newblock \href {http://dx.doi.org/10.1137/0217058}
  {\path{doi:10.1137/0217058}}.

\bibitem{LeeY94}
David Lee and Mihalis Yannakakis.
\newblock Testing finite-state machines: State identification and verification.
\newblock {\em {IEEE} Transactions on Computers}, 43(3):306--320, 1994.
\newblock \href {http://dx.doi.org/10.1109/12.272431}
  {\path{doi:10.1109/12.272431}}.

\bibitem{MichailS15}
Othon Michail and Paul~G. Spirakis.
\newblock Terminating population protocols via some minimal global knowledge
  assumptions.
\newblock {\em Journal of Parallel and Distributed Computing}, 81-82:1--10,
  2015.

\bibitem{Papadimitriou07}
Christos~H. Papadimitriou.
\newblock {\em Computational complexity}.
\newblock Academic Internet Publ., 2007.

\bibitem{SchmitzS13}
Sylvain Schmitz and Philippe Schnoebelen.
\newblock The power of well-structured systems.
\newblock In {\em Proc.\ $24^\text{th}$ International Conference on Concurrency
  Theory ({CONCUR})}, pages 5--24, 2013.
\newblock \href {http://dx.doi.org/10.1007/978-3-642-40184-8_2}
  {\path{doi:10.1007/978-3-642-40184-8_2}}.

\bibitem{Schnoebelen10}
Philippe Schnoebelen.
\newblock Revisiting {A}ckermann-hardness for lossy counter machines and reset
  {P}etri nets.
\newblock In {\em Proc.\ $35^\text{th}$ International Symposium on Mathematical
  Foundations of Computer Science ({MFCS})}, pages 616--628, 2010.
\newblock \href {http://dx.doi.org/10.1007/978-3-642-15155-2_54}
  {\path{doi:10.1007/978-3-642-15155-2_54}}.

\bibitem{Szelepcsenyi88}
R{\'{o}}bert Szelepcs{\'{e}}nyi.
\newblock The method of forced enumeration for nondeterministic automata.
\newblock {\em Acta Informatica}, 26(3):279--284, 1988.
\newblock \href {http://dx.doi.org/10.1007/BF00299636}
  {\path{doi:10.1007/BF00299636}}.

\bibitem{uhlendorf2015silico}
Jannis Uhlendorf, Agn{\`e}s Miermont, Thierry Delaveau, Gilles Charvin,
  Fran{\c{c}}ois Fages, Samuel Bottani, Pascal Hersen, and Gregory Batt.
\newblock In silico control of biomolecular processes.
\newblock In {\em Computational Methods in Synthetic Biology}, pages 277--285.
  2015.
\newblock \href {http://dx.doi.org/10.1007/978-1-4939-1878-2_13}
  {\path{doi:10.1007/978-1-4939-1878-2_13}}.

\bibitem{YickMG08}
Jennifer Yick, Biswanath Mukherjee, and Dipak Ghosal.
\newblock Wireless sensor network survey.
\newblock {\em Computer Networks}, 52(12):2292--2330, 2008.
\newblock \href {http://dx.doi.org/10.1016/j.comnet.2008.04.002}
  {\path{doi:10.1016/j.comnet.2008.04.002}}.

\end{thebibliography}

\clearpage
\appendix

\noindent{\LARGE\textbf{\textsf{Appendix}}}

\section*{Predicates computable by broadcast consensus protocols are in $\NL$: proof of
Proposition \ref{prop:gen:upper:bound}}

\propGenUpperBound*

\begin{proof}
  Let $\ell$ be the arity of $\varphi$. We show that there is a
  nondeterministic Turing machine, that
  decides, on input $\vec{x} \in \Nat^\ell$, whether $\varphi(\vec{x})
  = 1$ holds, und uses $\bigO(\log |\vec{x}|)$ space.

  Let $d \defeq |I(\vec{x})| = |\vec{x}|$. Let $G = (V,
  E)$ be the graph where $V$ is the set of all configurations of $\PP$
  of size $d$, and $(C, C') \in E$ if{}f $C \trans{} C'$. Every
  node of $V$ can be stored using at most $\bigO(\ell \cdot \log d)$ space
  and so, since $\ell$ is fixed, in space $\bigO(\log d)$. Since $C \trans{*} C'$
  implies $|C| = |C'|$, the set $V$ contains
  all configurations reachable from $C_0$.

  We claim that $\varphi(\vec{x}) = 1$ if{}f $G$ contains a
  configuration $C$ satisfying (1) $C_0 \trans{*} C$;\label{prop:reach} and (2)
  every configuration reachable from $C$, including $C$
    itself, is a $1$-consensus.

  If $\varphi(\vec{x}) = 1$, then such a configuration $C$ exists by
  definition. For the other direction, assume some configuration $C$
  satisfies both properties. By assumption, there exists a fair
  execution starting at $C_0$ that converges to $1$. Thus, since
  $\PP$ computes $\varphi$, every fair execution starting at $C_0$
  converges to $1$, and so $\varphi(\vec{x}) = 1$.

  By the claim, it suffices to exhibit a nondeterministic Turing
  machine $\TM_{12}(C)$ that
  runs in $\bigO(\log |C|)$ space and accepts a configuration $C$
  if{}f $C$ satisfies properties~(1) and (2).

  We first observe that there is a nondeterministic Turing machine
  $\TM(C)$ that runs in $\bigO(\log |C|)$ space and accepts the
  configurations $C$ satisfying the following property:
  \begin{quote}
    there exists a configuration $C'$ such that $C \trans{*} C'$
    and $C'$ is \emph{not} a $1$-consensus.
  \end{quote}
  The machine $\TM(C)$ starts at $C$, guesses a path of
  configurations step by step, and checks that the final configuration
  $C'$ is not a $1$-consensus. While guessing the path, the machine
  only stores two configurations at any given time, and so, since
  every configuration reachable from $C$ has the same size as $C$,
  and since $\Step \in \NL$, the machine only uses $\bigO(\log
  |C|)$ space. Checking whether $C'$ is not a 1-consensus can be
  done in constant space.

  Now we use the fact that space complexity classes are closed under
  complement~\cite{Imm88}. Since $\NL = \coNL$, there
  exists a nondeterministic Turing machine $ \overline{\TM}(C)$ that,
  given as input a configuration $C$, decides in $\bigO(\log |C|)$
  space whether every configuration reachable from $C$ is a
  consensus.

  The machine $\TM_{12}(C)$ first
  guesses a configuration $C'$ reachable from $C$, proceeding as in
  the description of $\TM(C)$, and then simulates
  $\overline{\TM}(C')$. Clearly, the machine runs in $\bigO(\log
  |C|)$ space.
\end{proof}

\section*{Simulation of polynomially bounded counter machines by $n$-bounded counter machines: proof Lemma~\ref{lem:poly:input}}

\newcommand{\Xo}{\overline{X}}
\newcommand{\zn}{z_n}
\newcommand{\zt}{z_0}

Recall that a counter machine is $n$-bounded if $|C| \leq |D|$ for
every initial configuration $D$ and every configuration $C$ reachable
from $D$. We relax this definition and say that a counter machine is
\emph{weakly $n$-bounded} if the property ``$|C| \leq |D|$'' is
replaced by ``$C(x) \leq |D|$ for every counter $x \in X$''. In other
words, in the weak setting, each counter is $n$-bounded instead of
having the sum of counters $n$-bounded.

We prove Lemma~\ref{lem:poly:input} in two steps. We first show that a
polynomially bounded counter machine can be converted to an equivalent
weakly $n$-bounded machine. We then show that a weakly $n$-bounded
machine can be made $n$-bounded.

\begin{proposition}\label{prop:poly:weaklin}
  For every polynomially bounded counter machine that computes some
  predicate $\varphi$, there exists a weakly $n$-bounded counter
  machine that computes $\varphi$.
\end{proposition}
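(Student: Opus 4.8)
The plan is to represent each counter of the given $n^c$-bounded machine by its digits in base $n+1$, so that every counter of the new machine ranges only over $[0, n]$. Concretely, let $\M$ have counters $X = \{x_1, \ldots, x_k\}$ and be $n^c$-bounded for some $c \in \N_{>0}$. I would build a machine $\Mo$ whose counters $\Xo$ comprise, for each $x \in X$, digit counters $x^{(0)}, x^{(1)}, \ldots, x^{(c-1)}$, together with a distinguished \emph{reference counter} $\zn$ and a constant number of auxiliary counters (including $\zt$) used by the gadgets below. The intended invariant is that a counter $x$ of $\M$ with value $v$ is encoded by $v = \sum_{i=0}^{c-1} x^{(i)} \cdot (n+1)^i$ with each digit $x^{(i)} \in [0, n]$. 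Because $\M$ is $n^c$-bounded and $n^c < (n+1)^c$, the value of every counter always fits in $c$ base-$(n+1)$ digits, so each digit counter stays in $[0, n]$; this is exactly what yields weak $n$-boundedness of $\Mo$. The input counters of $\Mo$ are the low digits $x_1^{(0)}, \ldots, x_m^{(0)}$: since each input $\vec{\alpha}_i \leq n$, the input of $\M$ is already in digit form, and $|C_{\vec{\alpha}}|$ is unchanged.

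The first step of $\Mo$ is an initialization phase that loads $\zn$ with the value $n$. Since $n = \sum_{i=1}^m \vec{\alpha}_i$ is just the sum of the input digits, this is done by copying $x_1^{(0)}, \ldots, x_m^{(0)}$ into $\zn$ with a standard copy-with-restore routine (decrement each input digit into both $\zn$ and a backup counter $\zt$, then move $\zt$ back), which keeps every counter in $[0, n]$. With $\zn$ available, I would implement three gadgets acting on an arbitrary digit counter $d$, each using only $\incr$, $\decr$, $\zero$, $\nonzero$ and scratch counters bounded by $n$: (a) assign $0$ to $d$ by decrementing until $\zero(d)$; (b) assign $n$ to $d$ by copying $\zn$ into $d$ with restore; and (c) test whether $d = n$ by comparing $d$ against $\zn$ (decrement both into backups until one is $\zero$, check whether both reached $0$ simultaneously, then restore).

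With these gadgets, I would simulate each instruction of $\M$ on the digit representation. A $\zero(x)$ test is simulated by sequentially zero-testing $x^{(0)}, \ldots, x^{(c-1)}$; a $\nonzero(x)$ test by $c$ alternative transitions, the $i$-th performing $\nonzero(x^{(i)})$, so that the macro is enabled if{}f some digit is nonzero, i.e.\ iff $x > 0$. An $\incr(x)$ is simulated by adding one to $x^{(0)}$ with carry propagation: using gadget (c) to detect a digit equal to $n$, gadget (a) to reset it to $0$, and carrying into the next digit; $n^c$-boundedness guarantees the carry never escapes the top digit. Symmetrically, $\decr(x)$ decrements $x^{(0)}$ with borrow propagation, using gadget (b) to set a digit to $n$ when borrowing. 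Each such macro is a bounded routine that always terminates and returns control to the state corresponding to the target state of the simulated transition, with $q_a$ and $q_r$ entered exactly when $\M$ would enter them.

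Finally, I would argue correctness. Decoding the digit counters shows that the configurations of $\Mo$ reachable at macro boundaries are in bijection with the reachable configurations of $\M$, and that $\Mo$ reaches its accepting (resp.\ rejecting) state exactly when $\M$ does; since every gadget terminates and introduces no internal deadlock, from any mid-gadget configuration one can always complete the gadget and then mimic $\M$, so both the acceptance condition and the rejection condition (reachability of $q_r$ from every reachable configuration) are preserved, and $\Mo$ computes $\varphi$. As every counter of $\Mo$ stays in $[0, n]$ throughout, $\Mo$ is weakly $n$-bounded. I expect the main obstacle to be precisely the bookkeeping around the reference value $n$: manufacturing $\zn$ without ever exceeding the per-counter bound, and designing the carry/borrow propagation and the copy/compare-with-restore routines so that they stay within $[0, n]$ on every counter \emph{and} never create a spurious deadlock, which is what could otherwise break the rejection semantics.
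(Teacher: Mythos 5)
Your proposal matches the paper's proof essentially step for step: the same base-$(n+1)$ digit encoding with $c$ digit counters per original counter, the same reference counter $z_n$ (initialized by summing the input digits with a restore loop through $z_0$), the same primitives (assign $0$, assign $n$, test whether a digit equals $n$), sequential zero-tests and parallel nonzero-tests, and carry/borrow propagation for increments and decrements. The one device you leave implicit is exactly the paper's answer to the spurious-deadlock worry you flag at the end: it adds no-op escape transitions from the intermediate states of the sequential zero-test gadget back to its entry state, so that a partially executed zero-test on a counter with a nonzero higher digit can always be abandoned rather than blocking, preserving the rejection semantics.
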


\begin{proof}
  Let $c \in \N_{> 0}$ and let $\M = (Q, X, \Delta, m, \allowbreak
  q_0, q_a, q_r)$ be an $n^c$-bounded counter machine.\medskip

  \parag{Counter values representation} To simulate $\M$ by a weakly
  $n$-bounded counter machine $\Mo$, we need some way to represent any
  counter value from $[0, n^c]$ with counters with values from $[0,
    n]$. Note that any number $\ell \in [0, n^c]$ can be encoded in
  base $n + 1$ over $c$ counters. For example, if $c = 3$ and $n = 4$,
  then $59 = 2 \cdot 5^2 + 1 \cdot 5^1 + 4 \cdot 5^0$ and hence $59$
  can be represented by $(2, 1, 4)$. Thus, we represent numbers from
  $[0, n^c]$ this way by vectors of $[0, n]^c$.\medskip

  \parag{Counters and states} We replace every counter $x$ by $c$
  counters: $\Xo_x \defeq \left\{x_i : 0 \leq i < c \right\}$, where
  $x_i$ represents digit $i$ of the base $n+1$ representation. The set
  of counters of $\Mo$ is:
  $$\Xo \defeq \bigcup_{x \in X} \Xo_x \cup \{\zt, \zn\}.$$ Machine
  $\Mo$ has the same arity as $\M$: counter $x_0$ is made an input
  counter for every $x \in X$. The control states of $\Mo$ form a
  superset of $Q$, the initial state is $\overline{q_0}$, and the
  accepting and rejecting states remain unchanged.

  Our construction will preserve the invariant $y \leq n$ for every $y
  \in \Xo$. This implies weak $n$-boundedness. Moreover, every
  forthcoming gadget will end with $\zn = n$ and $\zt = 0$.\medskip

  \parag{Transitions} Let us describe the transitions of $\Mo$. As
  depicted in Figure~\ref{fig:init}, a gadget initially computes the
  sum $n$ of the input counters, stores it in counter $\zn$, and
  restores the contents of the input counters. The purpose of $\zn$ is
  to store $n$ so that other gadgets can use it.

  \begin{figure}[!h]
    \centering
    \begin{tikzpicture}[auto, very thick, node distance=2.5cm, transform shape, scale=0.8]
      \node[state, font=\large]              (q0p) {$\overline{q_0}$};
      \node[state]          [right of=q0p]   (0)   {};
      \node[state]          [right of=0]     (1)   {};
      \node[state]          [right=6cm of 1] (q0)  {$q_0$};

      \path[->]
      (q0p) edge node {$\zero(x_0)$} (0)

      (q0p) edge[loop above] node {
        \begin{tabular}{l}
          $\decr(x_0)$ \\
          $\incr(\zt)$ \\
          $\incr(\zn)$
        \end{tabular}
      } ()

      (0) edge[loop above] node {
        \begin{tabular}{l}
          $\incr(x_0)$ \\
          $\decr(\zt)$
        \end{tabular}
      } ()

      (0) edge node {$\zero(\zt)$} (1)
      ;

      \path[->, dotted]
      (1) edge[] node {same for other input counters} (q0)
      ;
    \end{tikzpicture}
    \caption{Gadget initializing counter $\zn$ to value $n$ and moving
      to the initial state of $\M$. Here a single transition labelled
      with $k$ instructions is a shorthand for a sequence of $k$
      transitions where the first $k-1$ transitions can be reversed.}
    \label{fig:init}
  \end{figure}
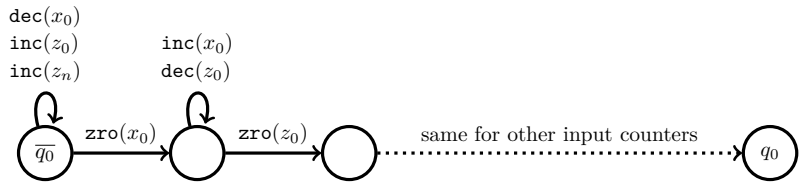

  We describe the other transitions of $\Mo$ by explaining how to
  simulate instructions $\zero(x), \nonzero(x), \incr(x)$ and
  $\decr(x)$ over $\Xo_x$, for every $x \in X$. Instruction $\nop$ is
  trivially simulated by $\nop$ itself. Let us fix $x \in X$. \medskip

  \parag{(Non)zero-tests} We have $x = 0$ if{}f $\sum_{y \in \Xo_x} y
  = 0$. Thus, instruction $\zero(x)$ is simulated by $c$ sequential
  zero-tests over $\Xo_x$. Similarly, $\nonzero(x)$ is simulated by
  parallel nonzero-tests. Both gadgets are depicted in
  Figure~\ref{fig:simul:zero}. Note that $\nop$ instructions are further added
  within the gadget for $\zero(x)$ to ensure that $\Mo$ does not block
  where $\M$ would not block.

  \begin{figure}[!h]
    \centering
    \begin{tikzpicture}[auto, very thick, node distance=1cm, scale=0.8, transform shape]
      \node[state, font=\large]              (q) {$q$};
      \node[state]              [right=of q] (1) {};
      \node[state]              [right=of 1] (2) {};
      \node[state]              [right=of 2] (3) {};
      \node[state, font=\large] [right=of 3] (r) {$r$};

      \path[->]
      (q) edge node[yshift=10pt, xshift=-5pt] {$\zero(x_0)$} (1)
      (1) edge node[yshift=10pt] {$\zero(x_1)$} (2)
      (3) edge node[yshift=10pt] {$\zero(x_{c-1})$} (r)

      (1) edge[bend left=30] node[xshift=7pt] {$\nop$} (q)
      (2) edge[bend left=45] node[xshift=7pt] {$\nop$} (q)
      (3) edge[bend left=60] node[xshift=7pt] {$\nop$} (q)
      ;

      \path[->, dotted]
      (2) edge node {} (3)
      ;
    \end{tikzpicture}
    \hspace*{20pt}
    \begin{tikzpicture}[auto, very thick, node distance=5cm, transform shape, scale=0.8]
      \node[state, font=\large] [below=of q]  (qn) {$q$};
      \node[state, font=\large] [right=of qn] (rn) {$r$};

      \path[->]
      (qn) edge[out=60,  in=120] node {$\nonzero(x_0)$} (rn)
      (qn) edge[out=30,  in=150] node {$\nonzero(x_1)$} (rn)
      (qn) edge[out=-30, in=210] node {$\nonzero(x_{c-1})$} (rn)
      ;

      \path[->, dotted]
      (qn) edge[out=0, in=180] node {} (rn)
      ;
    \end{tikzpicture}
    \caption{Gadgets for the simulation of instructions $(q, \zero(x),
      r) \in \Delta$ (left) and $(q, \nonzero(x), r) \in \Delta$
      (right).}
    \label{fig:simul:zero}
  \end{figure}
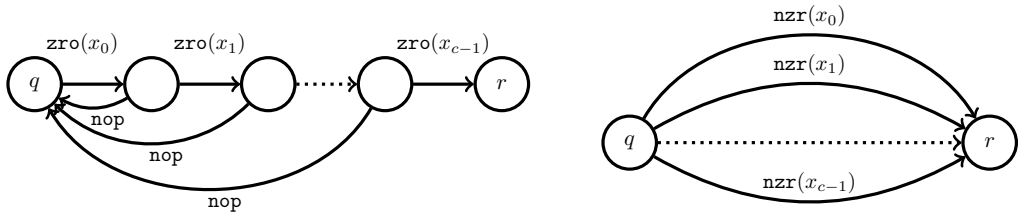

  \parag{Incrementation/decrementation} Incrementing
  (resp.\ decrementing) $x$ amounts to incrementing
  (resp.\ decrementing) its base-$(n+1)$ representation. These
  standard operations require the following primitives: ``$x_i
  \leftarrow 0$'', ``$x_i \leftarrow n$'', ``$x_i = 0$?'' and ``$x_i =
  n$?''. Their straightforward implementations are depicted in
  Figure~\ref{fig:gadgets:base}.

  \begin{figure}[!h]
    \centering
    \begin{tabular}{cp{1cm}c}
      \begin{tikzpicture}[auto, very thick, node distance=1.5cm, transform shape, scale=0.8]
        \node[state, font=\large]              (q) {$q$};
        \node[state, font=\large] [right=of q] (d) {};

        \path[->]
        (q) edge[loop above] node {$\decr(x_i)$} ()
        (q) edge             node {$\zero(x_i)$} (d)
        ;
      \end{tikzpicture} &&

      \begin{tikzpicture}[auto, very thick, node distance=1.5cm, transform shape, scale=0.8]
        \node[state, font=\large]              (q) {$q$};
        \node[state]              [right=of q] (0) {};
        \node[state]              [right=of 0] (1) {};
        \node[state]              [right=of 1] (2) {};

        \path[->]
        (q) edge[loop above] node {$\decr(x_i)$} ()
        (q) edge             node {$\zero(x_i)$} (d)

        (0) edge[loop above] node {
          \begin{tabular}{l}
            $\incr(x_i)$ \\
            $\incr(\zt)$ \\
            $\decr(\zn)$
          \end{tabular}
        } ()

        (0) edge node {$\zero(\zn)$} (1)

        (1) edge[loop above] node {
          \begin{tabular}{l}
            $\decr(\zt)$ \\
            $\incr(\zn)$
          \end{tabular}
        } ()

        (1) edge node {$\zero(\zt)$} (2)
        ;
      \end{tikzpicture} \\[10pt]

      \multicolumn{3}{c}{%
        \begin{tikzpicture}[auto, very thick, node distance=2cm, scale=0.8, transform shape]
          \node[state, font=\large]                                (q) {$q$};
          \node[state]              [right=of q]                   (0) {};
          \node[state]              [above right=1cm and 2cm of 0] (1) {};
          \node[state]              [below right=1cm and 2cm of 0] (2) {};
          \node[state, font=\large] [right=of 1]                   (y) {yes};
          \node[state, font=\large] [right=of 2]                   (n) {no};

          \path[->]
          (q) edge[loop above] node {
            \begin{tabular}{l}
              $\decr(x_i)$ \\
              $\decr(\zn)$ \\
              $\incr(\zt)$
            \end{tabular}
          } ()

          (q) edge node       {$\zero(x_i)$}    (0)
          (0) edge node       {$\zero(\zn)$}    (1)
          (0) edge node[swap] {$\nonzero(\zn)$} (2)

          (1) edge[loop above] node {
            \begin{tabular}{l}
              $\incr(x_i)$ \\
              $\incr(\zn)$ \\
              $\decr(\zt)$
            \end{tabular}
          } ()

          (2) edge[loop above] node {
            \begin{tabular}{l}
              $\incr(x_i)$ \\
              $\incr(\zn)$ \\
              $\decr(\zt)$
            \end{tabular}
          } ()

          (1) edge node       {$\zero(\zt)$} (y)
          (2) edge node[swap] {$\zero(\zt)$} (n)
          ;
        \end{tikzpicture}
      }
    \end{tabular}
    \caption{Gadgets for primitives ``$x_i \leftarrow 0$'' (top left),
      ``$x_i \leftarrow n$'' (top right) and ``$x_i = n$?'' (bottom)
      starting from state $q$. All gadgets start and end with counter
      values $\zn = n$ and $\zt = 0$. Here a single transition
      labelled with $k$ instructions is a shorthand for a sequence of
      $k$ transitions where the first $k-1$ transitions can be
      reversed.}
    \label{fig:gadgets:base}
  \end{figure}
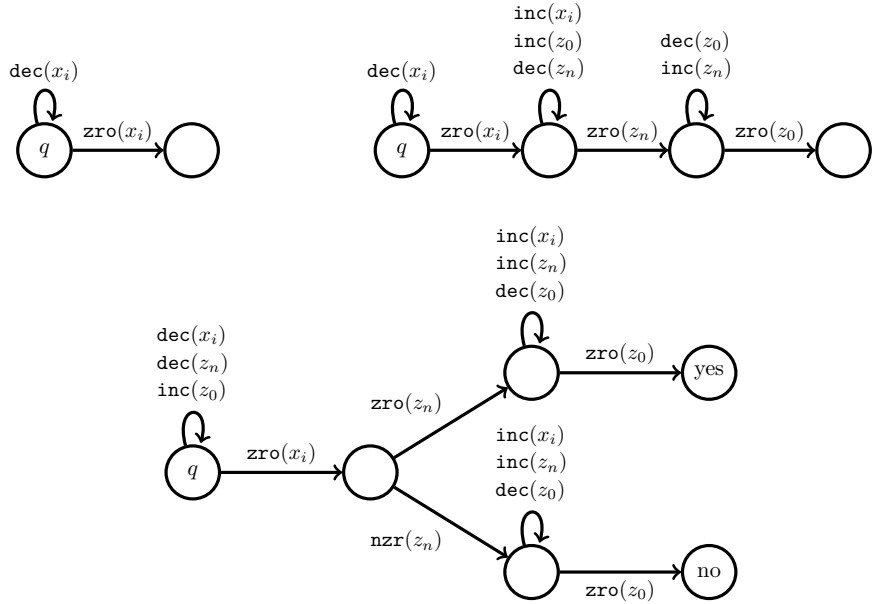
\end{proof}

\begin{proposition}\label{prop:weaklin:lin}
  For every weakly $n$-bounded counter machine that computes some
  predicate $\varphi$, there exists a $n$-bounded counter machine that
  computes $\varphi$.
\end{proposition}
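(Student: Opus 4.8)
The plan is to carry out the encoding sketched in the proof of Lemma~\ref{lem:poly:input}. Let $\Mo$ denote the given weakly $n$-bounded machine and let $\Xo$ be its set of counters; recall that every reachable configuration $C$ of $\Mo$ satisfies $C(x) \le n$ for each $x \in \Xo$, where $n = |C_{\vec\alpha}|$. I build an $n$-bounded machine $\M'$ whose counters are $\{y_S : S \subseteq \Xo\}$, a fixed number $2^{|\Xo|}$ of counters independent of $n$. The intended meaning is that $y_S$ counts a pool of tokens, each token of $y_S$ lying simultaneously in every counter of $S$; thus the value of a counter $x \in \Xo$ is read off as $\sum_{S \ni x} y_S$, and the invariant I maintain is $\sum_{S \subseteq \Xo} y_S = n$. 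Since the gadgets below never create or destroy a token but only relabel it, this total is preserved, so $|C| = n = |C_{\vec\alpha}|$ for every reachable $C$ and $\M'$ is indeed $n$-bounded. The input counters of $\M'$ are the singletons $y_{\{x\}}$ for the input counters $x$ of $\Mo$, so that the initial configuration of $\M'$ on $\vec\alpha$ places $\alpha_i$ tokens in each $y_{\{x_i\}}$ and none elsewhere, faithfully representing the initial configuration of $\Mo$ with total $n$.

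Next I would implement each instruction of $\Mo$ by a gadget on the $y_S$ counters. To simulate $\incr(x)$, $\M'$ nondeterministically selects some $S \not\ni x$ with $y_S > 0$ and fires $\decr(y_S)$ followed by $\incr(y_{S \cup \{x\}})$: relabelling one token from $S$ to $S \cup \{x\}$ raises only the marginal of $x$ by one and leaves every other marginal, and the total, unchanged. Symmetrically, $\decr(x)$ moves a token from some $S \ni x$ with $y_S > 0$ to $S \setminus \{x\}$. A test $\zero(x)$ is simulated by sequentially zero-testing $y_S$ for all $S \ni x$ (since $C(x) = 0$ iff all these vanish), padded with $\nop$ transitions as in Figure~\ref{fig:simul:zero} so that $\M'$ never blocks where $\Mo$ would not; $\nonzero(x)$ nondeterministically picks some $S \ni x$ and tests $y_S > 0$; and $\nop$ is left untouched. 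Each gadget needs only finitely many auxiliary control states to enumerate the constantly many subsets $S$.

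It then remains to verify that the simulation is faithful. I would say that a configuration $D$ of $\M'$ \emph{represents} a configuration $C$ of $\Mo$ when their control states agree, $\sum_{S} D(y_S) = n$, and $\sum_{S \ni x} D(y_S) = C(x)$ for every $x \in \Xo$. The decisive observation is that whether a gadget can fire depends only on the represented marginals, not on the particular representation: a free token for $\incr(x)$ exists iff $\sum_{S \not\ni x} D(y_S) = n - C(x) > 0$, i.e.\ iff $C(x) < n$; a token for $\decr(x)$ exists iff $C(x) > 0$; and the (non)zero gadgets read off $C(x)$ directly. Consequently, from any $D$ representing $C$ one can mimic each step $C \trans{} C'$ of $\Mo$ and reach some $D'$ representing $C'$, and conversely every maximal run of a gadget out of $D$ lands in a configuration representing the corresponding successor of $C$. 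Pushing this simulation relation through shows that $\M'$ reaches $q_a$ from its initial configuration if and only if $\Mo$ does, and that the rejection condition transfers as well, so that $\M'$ computes $\varphi$.

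The hard part will be the forward direction of the increment gadget, namely guaranteeing that $\M'$ can always find a token outside $x$ precisely when $\Mo$ increments $x$. This is exactly where weak $n$-boundedness is indispensable: it ensures $C(x) \le n$ for every reachable $C$, so any increment actually taken along a run of $\Mo$ starts from a configuration with $C(x) < n$, whence $\sum_{S \not\ni x} D(y_S) = n - C(x) > 0$ supplies the required token. With the weak bound in force, the increment gadget blocks only when $C(x) = n$, a situation that never precedes a legitimate increment; the remaining work — closing both directions of the simulation and checking that initialization together with the states $q_a$ and $q_r$ transfer correctly — is routine bookkeeping.
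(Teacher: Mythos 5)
Your proposal is correct and follows essentially the same route as the paper's proof: the same subset-indexed counters $y_S$ with marginal readout $C(x) = \sum_{S \ni x} y_S$, the same invariant $\sum_S y_S = n$ preserved by decrement-before-increment token relabelling, and the same treatment of zero-tests (sequential tests over all $S \ni x$ with $\nop$ escapes) and nonzero-tests (nondeterministic/parallel tests). Your explicit observation that gadget enabledness depends only on the represented marginals, and the role of weak $n$-boundedness in guaranteeing a free token for increments, makes precise the simulation argument the paper leaves implicit.
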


\begin{proof}
  Let $\M = (Q, X, \Delta, m, \allowbreak q_0, q_a, q_r)$ be a weakly
  $n$-bounded counter machine. We construct an $n$-bounded counter
  machine $\Mo$ that simulates $\M$.\medskip

  \parag{Counter values representation} We introduce counters that
  symbolically hold values from multiple counters of $\M$. The
  counters of $\Mo$ are defined as $\Xo \defeq \{y_S : S \subseteq
  X\}$. Intuitively, if counter $y_S$ has value $a$, then it
  contributes by $a$ to the value of each counter of $S$. More
  formally, let $C$ be a configuration of $\Mo$. The value represented
  by $C$ for counter $x \in X$ is defined as:
  $$C(x) \defeq \sum_{\{x\} \subseteq S \subseteq X} C(y_S).$$

  For example, consider the case with $3$ counters and an input $(4,
  0, 2)$ of size $n = 6$. Suppose we can reach $(6, 1, 4)$ in
  $\M$. Observe that it is weakly $n$-bounded, but \emph{not}
  $n$-bounded. However, counter values $(x_1, x_2, x_3) = (6, 1, 4)$
  can be represented in $\Mo$ as:
  \begin{align*}
    y_{\{x_1, x_2, x_3\}} &= 1, \\
    y_{\{x_1, x_3\}}      &= 3, \\
    y_{\{x_1\}}           &= 2, \\
    y_S                   &= 0~\text{for every other $S$}.
  \end{align*}
  Under such a representation, the sum of all counters equals $n$,
  e.g.\ $n = 6$ here. Moreover, all counter values from $\{0, 1,
  \ldots, n\}^k$ can be represented in this fashion (in possibly many
  ways). In particular, $\vec{0}$ is represented by $y_\emptyset = n$
  and $y_S = 0$ for every $S \neq \emptyset$.\medskip

  \parag{Counters and states} Machine $\Mo$ has the same arity as
  $\M$: counter $y_{\{x\}}$ is made an input counter for every $x \in
  X$. The control states of $\Mo$ form a superset of $Q$, and the
  initial, accepting and rejecting states remain unchanged.

  Our construction will preserve the invariant $\sum_{S \subseteq X}
  y_S = n$. This implies $n$-boundedness. We define the transitions of
  $\Mo$ by describing how each instruction is implemented.\medskip

  \parag{(Non)zero-tests} We have $x = 0$ if{}f $\sum_{\{x\} \subseteq
    S \subseteq X} y_S = 0$. Thus, instruction $\zero(x)$ is simulated
  by $2^{|X|-1}$ sequential zero-tests. Similarly, $\nonzero(x)$ is
  simulated by parallel nonzero-tests. Both gadgets are analogous to
  those of Figure~\ref{fig:simul:zero} constructed for
  Proposition~\ref{prop:poly:weaklin}.\medskip

  \parag{Incrementation/decrementation} In order to increment $x \in
  X$, we nondeterministically decrement any counter $y_S$ such that
  $x \not\in S$, and increment $y_{S \cup \{x\}}$. Similarly, to
  decrement $x \in X$, we non deterministically decrement any counter
  $y_S$ such that $x \in S$, and increment $y_{S \setminus
    \{x\}}$. Note that both gadgets decrement before incrementing,
  thereby preserving $n$-boundedness.
\end{proof}

\lemPolyInput*

\begin{proof}
  This follows from applying Proposition~\ref{prop:poly:weaklin}
  followed by Proposition~\ref{prop:weaklin:lin}.
\end{proof}

\section*{Single-signal broadcast protocols: proof of Proposition~\ref{prop_single_signal}}

\propSingleSignal*

\begin{proof}
By Corollary~\ref{cor:silently}, it suffices to show that for every broadcast protocol
$\PP = (Q, R, B, \Sigma, \allowbreak L, I, O)$ that silently computes a predicate there is a single-signal protocol
$\PP' = (Q', R', B', \Sigma, \allowbreak I', O')$ that computes the same predicate.

\parag{States and mappings}
The states of $\PP'$ are defined as:
$$Q' \defeq Q^2 \cup (Q^2 \times B) \cup
  (Q^2 \times \{\textit{frozen}, \error, \textit{reset}\}).
  $$

Every state of $Q'$ has two or three components. As in the construction of Lemma~\ref{lemma:semi-to-nonsemi}, the first two components describe the \emph{position} of the agent, and its \emph{origin}. Agents never change ther origin, and so they know which state to return to if they are told to reset.

The third component can be either a broadcast transition, or one of $\{\textit{frozen}, \error, \textit{reset}\}$.
The intended meaning of agent $a$ being in a state with third component $c$ is as follows:
\begin{itemize}
\item $c \in B$: agent $a$ is in charge of simulating $c$ by first freezing all other agents and then performing a
rendez-vous with each of them;

\item $c = \textit{frozen}$: agent $a$ is currently waiting for a rendez-vous with the broadcasting agent that told it to freeze;

\item $c = \error$: agent $a$ has decided not to wait any longer for the rendez-vous with the broadcasting agent, or a new broadcast signal has been sent before the simulation of the previous broadcast is completed;

\item $c = \textit{reset}$: agent $a$ is in charge of telling other agents to reset.
\end{itemize}

The input and output mappings are defined respectively by
\begin{align*}
  I'(x) &\defeq (I(x), I(x)) \text{ for every } x, \\
  O' &\defeq O \circ \textit{pos},
\end{align*}
where $\textit{pos}$ is the function that maps every state from $Q'$ to its position, i.e. $\textit{pos}(\vec{q}) \defeq q_1$ for every $\vec{q}=(q_1, q_2) \in Q^2$ and every $\vec{q}=(q_1, q_2, x) \in Q' \setminus Q^2$.

We now describe how $\PP'$ simulates $\PP$.

\parag{Initiation of a broadcast simulation} The ``freeze'' signal used by the protocol is described by the following function $f \colon Q' \to Q'$:
  \begin{align*}
    f(\vec{q}) \defeq \begin{cases}
      (\vec{q}, \textit{frozen}) &  \text{ if } \vec{q} \in Q^2, \\
      (\vec{r}, \error) & \text{ if } \vec{q}=(\vec{r}, x) \in Q' \setminus Q^2.
    \end{cases}
  \end{align*}
Intuitively, when an agent broadcasts this signal, it tells all agents in ``normal'' states to freeze.
In an error-free simulation, all agents are in such normal states, and so the broadcasts sends all other
agents to a ``frozen state'' $(\vec{r}, \textit{frozen})$.

When an already ``frozen'' agent receives the freeze signal, this means that the simulation of a broadcast or reset is not completed before another broadcast or reset is initiated, and the frozen agent assumes an ``error state''.
An error state indicates that the population must be reset at some point in the future.

For every broadcast transition $t \colon q \mapsto q'; g$ from $B$ and every $r \in Q$, we add to $B'$ the broadcast transition:
$${(q, r) \mapsto (q', r, t)}; f.$$
The agent in state $(q', r, t)$ is in charge of simulating the effect of the broadcast transition $t$ via rendez-vous with the other agents.

\parag{Simulation of rendez-vous}
For every rendez-vous transition $(q_1, q_2) \mapsto (q_1', q_2') \in R$ and every
$r_1, r_2 \in Q$, we add to $R'$ the rendez-vous transition:
$$\big((q_1, r_1), (q_2, r_2)\big) \mapsto \big((q_1', r_1),
    (q_2', r_2) \big) $$

\parag{Receiver's response}
For every broadcast transition $t \colon q \mapsto q'; g \in B$ and every $r_1, r_2, q_2 \in Q$, we add to
$R'$ the transition:
$$\big((q', r_1, t), (q_2, r_2,\textit{frozen})\big) \mapsto \big((q', r_1, t), (g(q_2), r_2)\big).$$

\parag{Completion of a broadcast}
For every broadcast transition $t \colon q \mapsto q'; g \in B$ and every $r_1, q_2, r_2 \in Q$, we add to $R'$ the transition:
$$\big( (q', r_1, t), (q_2, r_2) \big) \mapsto ((q', r_1), (q_2, r_2) ).$$

The transitions defined thus far would suffice if broadcasts were always simulated correctly. But we cannot rule out the initiation of a broadcast simulation before a previous simulation is completed, which yields an agent in an error state. Additional transitions are thus needed for error handling. Whenever an agent is in an error state, the population must be eventually reset to start a new, clean simulation attempt. In the implementation of the reset we must ensure that ``illegitimate'' agents, that have not yet been reset, cannot interact with ``legitimate'' agents, that have already been reset.

\parag{Initiation of a reset}
Agents in an error state may transition to a reset state in order to initiate
a reset. The agent in a reset state is in charge of implementing the reset via
rendez-vous with the other agents. When a reset is initiated,
there should be \emph{precisely one} agent in a reset state, while all other agents are
temporarily \emph{disabled}, for otherwise some agent's state could be modified
by some other ``illegitimate'' agent before the reset is completed, and a reset
agent would have no means to distinguish between ``legitimate'' agents and ``illegitimate'' agents.

We implement the initiation of a reset by a broadcast.
For every $(q, r) \in Q^2$, we add to $B'$ the broadcast transition:
$${(q, r, \error) \mapsto (r, r, \textit{reset})}; f.$$

\parag{Reset to origin}
For every $\vec{q} \in Q^2$ and every $(q, r, x) \in Q' \setminus Q^2$, we add to $R'$ the transition:
$$\big( (\vec{q}, \textit{reset}), (q, r, x) \big) \mapsto \big( (\vec{q}, \textit{reset}), (r, r) \big).$$

\parag{Completion of a reset}
A reset is (perhaps prematurely) completed when an agent in a reset state resets itself.
For every $\vec{q} \in Q^2, x \in Q'$, we add to $R'$ the transition:
$$\big( (\vec{q}, \textit{reset}), x \big) \mapsto (\vec{q},x ).$$

\parag{From frozen to error}
It may be the case that a reset agent resets itself to origin before all frozen agents have been reached.
To avoid that frozen agents wait forever to be ``unfrozen'', frozen agents can non-deterministically decide to assume an error state, thereby initiating a new reset.

For every $\vec{q} \in Q^2$ and every $x \in Q'$, we add to $R'$ the transition:
$$\big( (\vec{q}, \textit{frozen}), x \big) \mapsto \big( (\vec{q},
    \error), x \big).$$

  $\PP'$ computes the same predicate as $\PP$: Since $\PP$ is silent,
  every fair execution of $\PP$ reaches a terminal configuration, and by
  construction of $\PP'$, every correct simulation of $\PP$ in $\PP'$
  eventually reaches a terminal configuration of the same consensus.

  An error in the simulation may occur in one of two cases:
  Either another broadcast signal is initiated before the simulation of the
  last broadcast or a reset is completed, or the agent in charge of a reset reverts
  to its initial state before all other agents have been reset.
  In the former case, at least one agent is sent to an error state, which
  will eventually lead to a reset. In the latter case, at least one agent
  in an error state or a frozen agent remains. Frozen agents can non-deterministically
  choose to turn to error states, and thus eventually initiate a reset.
  In either case, the population is eventually reset to its initial configuration.
  Fairness guarantees that the simulation is eventually executed correctly.

  Note that silentness of protocol $\PP$
  is crucial for the correctness of the construction:
  Since $\PP$ is silent,  broadcast signals are bound to cease to occur in every
  correct simulation, and thus all agents eventually remain unfrozen forever, hence we may
  safely demand that frozen agents non-deterministically turn to error states, which
  allows us to handle incomplete resets.
\end{proof}

\section*{Protocols with reset: proof of Proposition \ref{prop:reset}}

\propReset*

\begin{proof}
  Let $\PP = (Q, R, B, \Sigma, I, O)$ be a population protocol with
  reset that computes some predicate. Let $\trans{*}$ be the reflexive
  and transitive closure of the relation $\trans{}$ of $\PP' \defeq
  (Q, R, \Sigma, I, O)$. For every $X \subseteq \pop{Q}$, let
  $$
  \pre^*(X) \defeq \{q \in \pop{Q} : q \trans{*} r\ \text{for some}\ r \in X\}.
  $$ Let $\mtrans{*}\ \defeq \{(C, C') \in \pop{Q}^2 : C \trans{*}
  C' \trans{*} C\}$ denote the \emph{mutual-reachability relation} of
  $\PP'$. A \emph{bottom strongly connected component (BSCC)} is a non
  empty set of pairwise mutually reachable configurations closed
  under reachability. We call a configuration $C$ \emph{bottom} if
  $C \in X$ for some BSCC $X$. Further let $\bottom$ denote the set of
  all bottom configurations of $\PP'$.  Recall
  from~\cite{esparza2017verification} that an execution $C_0 C_1
  C_2 \ldots$ of $\PP'$ is fair if and only if $C_i \in \bottom$ for
  all but finitely many indices~$i$.

  Let $\noreset$ be the set of configurations of $\PP'$ from which no
  reset can occur, i.e.\ let
  $$\noreset \defeq \{C \in \pop{Q} : \forall C' \in \pop{Q}, \forall
    t \in B \colon \text{if}\ C \trans{*} C', \text{then $t$ is disabled
      in $C'$}\}$$
  For every $b \in \{0, 1\}$, let $F_b \defeq S_b \cap \bottom \cap
  \noreset$, where $S_b$ is the set of \emph{$b$-stable}
  configurations: $$S_b \defeq \{C \in \pop{Q} : \forall C',\ \text{if}\ C
  \trans{*} C', \text{then}\ O(C') = b \}.$$

  We claim that the set of accepting initial configurations of $\PP$,
  denoted $I_1$, equals $\pop{I} \cap (S_1 \cup \pre^*(F_1))$. Let us
  prove the claim. Let $C_0$ be from the latter set. Either $C_0 \in
  S_1$ or $C_0 \in \pre^*(F_1)$. If $C_0 \in S_1$, then $C_0$ is
  $1$-stable in $\PP'$. Notice that every $1$-stable initial
  configuration of $\PP'$ is also $1$-stable in $\PP$, and
  consequently $C_0$ is accepting. If $C_0 \in \pre^*(F_1)$, then by
  definition of $F_1$, there is a BSCC $X \subseteq \bottom$ reachable
  from $C_0$ such that $O(C) = 1$ and such that resets are disabled
  for every $C \in X$. Since no reset is enabled in $X$, set $X$ is a
  BSCC not just in $\PP'$, but also in $\PP$. Hence, at least one fair
  execution of $\PP$, starting in $C_0$, stabilizes to $1$ and thus
  all fair executions of $\PP$ starting in $C_0$ stabilize to $1$. The
  converse direction is proven analogously.

  It remains to show that $I_1$ is Presburger-definable.
  For this, we
  make use of the following results
  from~\cite{esparza2017verification}:
  \begin{itemize}
  \item $S_0$, $S_1$, $\mtrans{*}$ and $\bottom$ are
    Presburger-definable;
  \item for every Presburger-definable sets $X, F_0, F_1 \subseteq \pop{Q}$,
    if sets $X \cap \pre^*(F_0)$ and $X \cap \pre^*(F_1)$ form a partition of $X$, then both sets
    are Presburger-definable.
  \end{itemize}

  By the above, $I' \defeq \pop{I} \setminus (S_0 \cup S_1)$ is a
  boolean combination of Presburger-definable sets, and hence
  Presburger-definable too. Similarly, $F_b$ is Presburger-definable
  for every $b \in \{0, 1\}$ through the Presburger formula
  $\psi_b(C)$:
  $$\psi_b(C) \defeq (C \in \bottom) \land \forall C' : \hspace{-11pt}
  \bigwedge_{(q, q', f) \in B} \hspace{-10pt} [(C \mtrans{*} C')
    \implies (C'(q) = 0)].$$ Since $F_0, F_1$ and $I'$ are
  Presburger-definable, and since sets $I'_0 \defeq I' \cap
  \pre^*(F_0)$ and $I'_1 \defeq I' \cap \pre^*(F_1)$ form a partition
  of $I'$, it follows by the above observations that both $I'_0$ and
  $I'_1$ are Presburger-definable.  Moreover, it is relatively
  straightforward to see that the following equalities hold:
  \begin{align*}
    I_1
    &= \pop{I} \cap (S_1 \cup \pre^*(F_1)) \\
    &= \pop{I} \cap (S_1 \cup (I' \cap \pre^*(F_1))) \\
    &= \pop{I} \cap (S_1 \cup I'_1).
  \end{align*}
  Therefore, $I_1$ is Presburger-definable, as it is a boolean
  combination of Presburger-definable sets.
\end{proof}

\end{document}